\documentclass[nofootinbib,superscriptaddress,twocolumn]{revtex4-2}
\usepackage[utf8]{inputenc}
\usepackage[T1]{fontenc}
\usepackage{times}
\usepackage{graphicx}
\usepackage{amsmath,amsfonts,amssymb,amscd,mathtools,amsthm}
\usepackage{hyperref}
\usepackage[capitalise]{cleveref}
\usepackage[dvipsnames]{xcolor}
\usepackage{framed}
\usepackage{physics}
\usepackage[normalem]{ulem}
\hypersetup{
	bookmarksnumbered=true, 
	unicode=false, 
	pdfstartview={FitH}, 
	pdftitle={}, 
	pdfauthor={}, 
	pdfsubject={}, 
	pdfcreator={}, 
	pdfproducer={}, 
	pdfkeywords={}, 
	pdfnewwindow=true, 
	colorlinks=true, 
	linkcolor=NavyBlue, 
	citecolor=NavyBlue, 
	filecolor=NavyBlue, 
	urlcolor=NavyBlue 
}
\usepackage[a4paper, margin=0.7in]{geometry}

\usepackage{enumitem}
\usepackage{ifthen}

\theoremstyle{plain}
\newtheorem{theorem}{Theorem}
\newtheorem{corollary}{Corollary}
\newtheorem{lemma}{Lemma}

\newtheorem{fact}{Fact}

\theoremstyle{definition}
\newtheorem{definition}{Definition}
\newtheorem{remark}{Remark}

\usepackage[most]{tcolorbox}
\tcbset{
    examplebox/.style={
        colback=pink!15,    
        colframe=pink!30, 
        fonttitle=\bfseries,
        coltitle=black,
        boxrule=1pt,
        arc=1mm,
        left=6pt, right=6pt, top=6pt, bottom=6pt,
        breakable = true,
    }
}

\newcounter{example}
\newenvironment{example}[1][]{
  \refstepcounter{example}%
  \begin{tcolorbox}[examplebox,
    title={Example~\theexample: #1}
  ]
}{
  \end{tcolorbox}
}

\DeclareMathOperator{\RNG}{RNG}
\DeclareMathOperator{\conv}{conv}

\DeclareMathOperator{\dist}{Dist}

\definecolor{gray}{HTML}{f0f0f0}
\colorlet{shadecolor}{gray}

\newcommand{\Smin}[1][]{S_{{\rm min}\ifthenelse{\equal{#1}{}}{}{, #1}}}
\newcommand{\Fmin}{F_{\rm min}}
\newcommand{\Smax}[1][]{S_{{\rm max}\ifthenelse{\equal{#1}{}}{}{, #1}}}
\newcommand{\Fmax}{F_{\rm max}}

\begin{document}
\title{Manipulating heterogeneous quantum resources over a network}
\author{Ray Ganardi}
\email{ray@ganardi.xyz}
\affiliation{School of Physical and Mathematical Sciences, Nanyang Technological
University, 21 Nanyang Link, 637371 Singapore, Republic of Singapore}

\author{Jeongrak Son}
\affiliation{School of Physical and Mathematical Sciences, Nanyang Technological
University, 21 Nanyang Link, 637371 Singapore, Republic of Singapore}

\author{Jakub Czartowski}
\affiliation{School of Physical and Mathematical Sciences, Nanyang Technological
University, 21 Nanyang Link, 637371 Singapore, Republic of Singapore}
\affiliation{School of Physics, Trinity College Dublin, Dublin 2, Ireland}

\author{Seok Hyung Lie}
\email{seokhyung@unist.ac.kr}
\affiliation{Department of Physics, Ulsan National Institute of Science and Technology (UNIST), Ulsan 44919, Republic of Korea}

\author{Nelly H.Y. Ng}
\email{nelly.ng@ntu.edu.sg}
\affiliation{School of Physical and Mathematical Sciences, Nanyang Technological
University, 21 Nanyang Link, 637371 Singapore, Republic of Singapore}
\affiliation{Centre for Quantum Technologies, Nanyang Technological University, 50 Nanyang Avenue, 639798 Singapore}
\date{\today}

\begin{abstract}
Quantum information processing relies on a variety of resources, including entanglement, coherence, non-Gaussianity, and magic.
In realistic settings, protocols run on networks of parties with \emph{heterogeneous} local resource constraints, so different resources coexist and interact. 
Yet, resource theories have mostly treated each resource in isolation, and a general theory for manipulation in such distributed settings has been lacking.
We develop a unified framework for composite quantum resource theories that describes distributed networks of locally constrained parties. 
We formulate natural axioms a composite theory should satisfy to respect the local structure, and from these axioms derive fundamental bounds on resource manipulation that hold universally, independent of the particular network characteristics.
We apply our results to central operational tasks, including resource conversion and assisted distillation, and introduce new methods to construct new resource monotones from this setup.
Our framework further reveals previously unexplored phenomena in the remote certification of quantum resources.
Together, these results establish foundational laws for distributed quantum resource manipulation across diverse physical platforms.
\end{abstract}

\maketitle

\section{Introduction}

What enables quantum information processing protocols to outperform its classical counterpart?
Over the past decades it has become clear there is no single answer: quantum advantage in different tasks relies on different physical resources. 
Some protocols require entanglement for enhancement~\cite{Bennett1992DenseCoding,Bennett1993Teleportation,Jozsa2003EntanglementAlgorithms,Pezze_2009}, while others hinge on coherence~\cite{Hillery2016Coherence,Anand2016CoherenceEntanglement,Ahnefeld2022CoherenceShor}, non-Gaussianity~\cite{Mari_2012,Bartlett_2002}, or magic~\cite{Gottesman_1998,Nielsen_Chuang_2010}. 
This diversity reflects the richness of quantum resources, but it also raises a structural question: how should distinct quantum resources be compared, combined, or traded against one another when they coexist within the same information-processing ecosystem?

Quantum resource theories provide a \textit{lingua franca} for characterizing and manipulating resources~\cite{Chitambar_2019}. 
In this framework, one specifies a set of free states $S$ and free operations $F$ satisfying $F(S)\subseteq S$, while designating those that cannot be generated for free as resource states. 
This conceptually clean abstraction has led to deep insights across a wide range of settings, including entanglement distillation~\cite{Bennett_1996a}, generalized second laws of thermodynamics~\cite{Horodecki_2013,Brandao_2015}, quantitative resource measures~\cite{Vedral_1997,Takagi2019_convex,Regula2018_convex}, and optimal quantum compilation~\cite{Howard_2017}. 
Yet, most resource-theoretic treatments assume \emph{homogeneous} constraints: they focus on a single resource and do not address how multiple, local limitations compose when states and operations are distributed across parties.

The emergence of large-scale quantum networks (also called the quantum internet~\cite{wehner2018QInternet}) urges us to address this gap.
Such networks will enable distributed quantum information-processing tasks such as quantum cryptography, distributed quantum sensing and computing.
However, their nodes may operate on distinct physical platforms and under different experimental conditions, and thus face \emph{heterogeneous} local resource constraints. 
This raises a fundamental question: what laws govern resource manipulation when each participant is subject to a different resource theory? 
More broadly, how do local constraints shape what is globally achievable? 

Two main streams of prior work have sought to address this challenge. 
Early approaches focused primarily on extending entanglement theory, progressively incorporating additional restrictions on local parties to model resource limitations~\cite{Horodecki_2003b,Smolin_2005,Chitambar_2016a,Streltsov_2017,Morris_2019,Dutil_2011,Lami2020_AssistedGaussian,Fiurasek_2002,Eisert_2002,Giedke_2002,Chitambar_2016b,arxiv_Bistron_2024,Ganardi_2025a,Andi_Gu_2025,Synak-Radtke_2005,arxiv_Horodecki_2005a}. 
A complementary approach adopts an axiomatic perspective, combining independent resource-theoretic structures under general principles~\cite{Sparaciari_2017,Sparaciari_2020,WS2024}. 
The axiomatic approach is very general but often too weakly constrained to yield strong operational conclusions.

In this work, we build on the strengths of the axiomatic approach, while introducing physically motivated assumptions tailored to heterogeneous quantum networks. 
We construct a framework that hybridizes heterogeneous local resource theories, under the simplifying assumption that each party is governed by a single local resource constraint. 
Specifically, we define axioms for composite, network-level operations and derive universal bounds on multi-resource manipulation in both single-shot and asymptotic regimes. 
Our framework also provides new resource monotones and reveals operational tasks unique to the heterogeneous setting, including remote certification of resources and resource conversion. 
Together, these results establish a unified, predictive theory of resource manipulation for heterogeneous quantum networks.

\begin{figure*}
	\includegraphics[width=\textwidth]{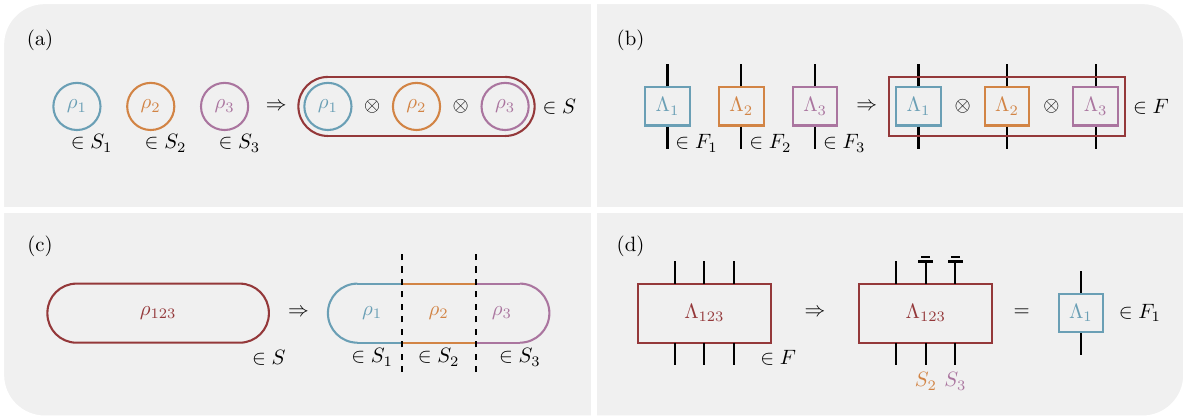}
	\caption{\label{figure:compatibility-rules}
		Rules of compatibility between the local resource theories and the composite resource theory (see \cref{definition: composite RTs} for the formal statement):
		(a) tensor product of locally free states are free,
		(b) tensor product of locally free operations are free,
		(c) all reduced states of a composite free state are locally free, and
		(d) for any free input, all of the effective reduced operations of a composite free operation are locally free. Heuristically, rules (a) and (b) can be thought as product rules, while rules (c) and (d) as marginal rules.
	}
\end{figure*}
\section{Composite resource theory}\label{sec:composing}

Given a Hilbert space $\mathcal{H}$, let $\mathcal{S}(\mathcal{H})$ denote the set of density matrices, and $\mathcal{O}(\mathcal{H})$ denote the set of quantum channels.
We first state a minimal requirement for what constitutes a quantum resource theory.

\begin{definition}\label{def:RT}
	Let $S \subseteq \mathcal{S}(\mathcal{H}), F \subseteq \mathcal{O}(\mathcal{H})$.
	We say that $(S, F)$ is a \textit{resource theory} if $F$ includes the identity channel, is closed under concatenation, and $F(S) \subseteq S$.
	We call $S$ the set of free states and $F$ the set of free operations.
	The states in $\mathcal{S}(\mathcal{H}) \setminus S$ are called resource states.
\end{definition}

A resource theory is convex when $S$ and $F$ are convex. 
Closure under concatenation expresses the natural requirement that a sequential composition of free operations remains free. While this property is not universally imposed, for example in resource theories where time complexity is a resource, it holds in the vast majority of physically motivated settings and will be assumed throughout this work.
Given a set of free states $S$, the associated set of \emph{resource non-generating} (RNG) operations is defined as
\begin{equation}
	\RNG(S) = \Bqty{\Lambda \in \mathcal{O}(\mathcal{H}) \,|\, \Lambda(S) \subseteq S}.
\end{equation}
The pair $(S, \RNG(S))$ defines the RNG resource theory associated with $S$. Prominent examples include unital channels when $S$ consists of the maximally mixed state, and Gibbs-preserving maps in quantum thermodynamics, where $S$ is the Gibbs state~\cite{Faist2015}.

We now consider a composite system consisting of multiple parties, where each party is governed by a local resource theory $(S^{(i)}, F^{(i)})$.
A canonical example is a quantum network with multiple nodes, with each node being constrained by a specific resource and the corresponding local resource theory.
In general, different nodes may be governed by different resource theories. 
This naturally raises the question: which global (i.e., composite) resource theories are compatible with a given collection of local ones?
To answer this question, we introduce minimal consistency conditions for resource networks (illustrated in \cref{figure:compatibility-rules}).

\begin{definition}\label{definition: composite RTs}
  Let $(S^{(i)}, F^{(i)})$ be a collection of local resource theories on $\mathcal{H}^{(i)}$.
  We say that $(S, F)$ is a \emph{composite} resource theory on $\bigotimes_i \mathcal{H}^{(i)}$ if $(S, F)$ is a resource theory, and additionally the following conditions are satisfied: \vspace{-0.2cm}
  \begin{enumerate}[label=\alph*),itemsep=0pt,leftmargin=*]
	  \item\label{condition:free product states} \emph{free product states}: if $\rho^{(i)} \in S^{(i)}$, then $\bigotimes_i \rho^{(i)} \in S$.
	  \item\label{condition:free product operations} \emph{free product operations}: if $\Lambda^{(i)} \in F^{(i)}$, then $\bigotimes_i \Lambda^{(i)} \in F$.
	  \item\label{condition:free marginal states} \emph{free marginal states}: for any $\rho \in S$, we have $\Tr_{\overline{i}} \rho \in S^{(i)}$, where $\Tr_{\overline{i}} $ denotes the partial trace on all subsystems except the $i$-th one.
	  \item\label{condition:free marginal operations} \emph{free marginal operations}: for any $\Lambda \in F$ and $\rho^{(j)} \in S^{(j)}$, the operation $X^{(i)} \mapsto \Tr_{\overline{i}} \Lambda(X^{(i)} \otimes \bigotimes_{j \neq i} \rho^{(j)}) \in F^{(i)}$ for all $i$.
  \end{enumerate}
\end{definition}

The first two conditions, providing interior bounds, ensure that each party can control its own resource locally and independently.
The latter two ensure that the composite theory reduces to local ones when restricted to local systems, thus providing exterior bounds.
In particular, \ref{condition:free marginal states} and \ref{condition:free marginal operations} exclude pathological scenarios in which a local resource could be generated by preparing a composite free state and discarding subsystems. 
We compare our framework in detail with earlier compositional strategies in \ref{supp:related-works}.

\cref{definition: composite RTs} is reminiscent of the famous Brandão-Plenio axioms~\cite{Brandao_2010} (see also \cref{def:BPaxioms} in \ref{section:homogeneous-composition}), which we use as a guiding principle.
These axioms guarantee that the resource theory is well-behaved when considering multi-copy transformations.
Conditions \ref{condition:free product states} and \ref{condition:free marginal states} concerning free states are identical to the Brandão-Plenio axioms (3) and (4) in \cref{def:BPaxioms}.
The other two concerning composite free operations are novel, but are automatically satisfied when preparation of any free state and partial trace are allowed.
Thus \cref{definition: composite RTs} can be regarded as a minimal generalization of the Brandão-Plenio axioms that accommodates heterogeneous resources.
Note that \cref{definition: composite RTs} is deliberately minimal: it does not assume partial trace is free in general. 
Alternative axiomatizations are discussed in \ref{supp:heterogeneous}.

Finally, \cref{definition: composite RTs} does not uniquely specify a composite resource theory. 
Even in the homogeneous setting, multipartite extensions of standard resource theories already reveal residual freedom in how local structures can be composed. 
This ambiguity becomes more pronounced in heterogeneous scenarios, where local theories differ across parties.
Nevertheless, we show that it is possible to define well-motivated \emph{minimal} and \emph{maximal} composite constructions, that act as central benchmarks throughout this work. 

\begin{definition}\label{definition: min max sets}
  Let $(S^{(i)}, F^{(i)})$ be a collection of local resource theories. 
  We define \emph{minimal composite free states} as $\Smin = \conv \Bqty{\bigotimes_i \rho^{(i)} \,|\, \rho^{(i)} \in S^{(i)}}$ and \emph{minimal composite free operations} as $\Fmin = \conv \Bqty{\bigotimes_i \Lambda^{(i)} \,|\, \Lambda^{(i)} \in F^{(i)}}$.
  We refer to $(\Smin, \Fmin)$ as the minimal composite theory. 
  Similarly, we define the \emph{maximal composite free states} as $\Smax = \Bqty{\rho \,|\, \Tr_{\overline{i}} \rho \in S^{(i)}}$.
\end{definition}

One can verify that $(\Smin, \Fmin)$ and $(\Smax, \Fmin)$ are valid composite theories satisfying \cref{definition: composite RTs}.
Moreover, Definitions~\ref{definition: composite RTs} and~\ref{definition: min max sets} immediately imply the following:
\begin{fact}\label{lemma:composite-free-states}
  For any convex composite resource theory $(S, F)$, we have $\Smin \subseteq S \subseteq \Smax$ and $\Fmin \subseteq F$.
\end{fact}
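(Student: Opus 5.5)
The three inclusions follow directly from the definitions, and I would prove them one at a time. The only structural input beyond \cref{definition: composite RTs} is convexity of $(S,F)$, which is needed for the two lower bounds. No earlier result is required.

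\emph{Step 1: $\Smin \subseteq S$.} Take any product $\bigotimes_i \rho^{(i)}$ with $\rho^{(i)} \in S^{(i)}$. Condition~\ref{condition:free product states} places it in $S$. Since $S$ is convex, every finite convex combination of such products also lies in $S$. By \cref{definition: min max sets}, $\Smin$ is exactly the convex hull of these products, so $\Smin \subseteq S$.

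\emph{Step 2: $\Fmin \subseteq F$.} The argument is identical, with condition~\ref{condition:free product operations} in place of condition~\ref{condition:free product states}. Each $\bigotimes_i \Lambda^{(i)}$ with $\Lambda^{(i)} \in F^{(i)}$ lies in $F$, and convexity of $F$ then gives $\conv\{\bigotimes_i \Lambda^{(i)}\} \subseteq F$.

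\emph{Step 3: $S \subseteq \Smax$.} This inclusion needs no convexity. For $\rho \in S$, condition~\ref{condition:free marginal states} gives $\Tr_{\overline{i}}\rho \in S^{(i)}$ for every $i$. That is precisely the membership criterion defining $\Smax$.

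There is no genuine obstacle here. The one point to be careful about is that $\conv$ in \cref{definition: min max sets} means finite convex combinations. If a closed convex hull were intended instead, Steps 1 and 2 would additionally require $S$ and $F$ to be closed, and I would add that assumption explicitly (or note that it holds in finite dimensions when the sets are closed).
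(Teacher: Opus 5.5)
Your proof is correct and is exactly the argument the paper has in mind when it says the Fact follows ``immediately'' from \cref{definition: composite RTs} and \cref{definition: min max sets}. Conditions (a) and (b) together with convexity of $S$ and $F$ give $\Smin \subseteq S$ and $\Fmin \subseteq F$, and condition (c) alone gives $S \subseteq \Smax$.
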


The extremal composite free state sets $\Smin,\Smax$ were identified in Ref.~\cite{Son_RobustCat} in a context related but restricted to homogeneous compositions.
In particular, for some resources like coherence and athermality, $\Smin$ is exactly the set of homogeneous composite free states.
$\Fmin$ is first defined in this work from our minimal axioms. 

One can easily identify previously studied resource theories as special cases of this general framework; see \ref{supp:composition} for comprehensive discussion.
Here we consider the case where no local restrictions exist and illustrate how different variants of entanglement theory emerge from the composition structure. 
Suppose that $S^{(i)} = \mathcal{S}(\mathcal{H}^{(i)})$ and $F^{(i)} = \mathcal{O}(\mathcal{H}^{(i)})$.
Then $\Smin$ corresponds to the set of separable states, and $\Fmin$ is the well-known set of local operations and shared randomness (LOSR)~\cite{arxiv_Dukaric_2008,Buscemi_2012a,Forster_2009,Schmid2020typeindependent,Schmid_2023}.
The resulting minimal composite theory $(\Smin,\Fmin)$ is thus precisely the resource theory of entanglement under LOSR operations. 
A closely related and widely studied alternative is the set of local operations and classical communication (LOCC)~\cite{Bennett_1996b,Donald_2002,Chitambar_2014}, which models the scenario in which quantum parties are connected by a classical communication network. 
Motivated by the central role of LOCC in quantum information theory, this notion has been generalized to \emph{locally free operations and classical communication} (LFOCC)~\cite{Horodecki_2003b,Synak-Radtke_2005,arxiv_Horodecki_2005a,Fiurasek_2002,Eisert_2002,Giedke_2002,Chitambar_2016b,Streltsov_2017,arxiv_Bistron_2024,Ganardi_2025a,Andi_Gu_2025}, formally defined in \ref{supp:composition}, which captures the setting in which each party is subject to local resource constraints while having access to classical communication. 
Finally, the set $\RNG(\Smin)$ corresponds to the class of non-entangling operations.

By contrast, the maximal composite free states $\Smax$ now coincides with the set of all quantum states, and the associated theory $(\Smax,\RNG(\Smax))$ becomes trivial, in the sense that all states are free and all operations are allowed. In this special case, a maximal set of composite free operations emerges, reflecting the absence of any local resource constraints. However, we will show soon that such a maximal composite set is generally not to be expected. 

\section{Results}
A central task in any resource-theoretic framework is to characterize the possibility of state transformations. Given a resource theory $(S,F)$, this amounts to determining whether a transformation $\rho \xrightarrow{(S, F)} \sigma$ is possible, i.e., whether there exists a free operation that maps $\rho$ to $\sigma$. A standard tool for addressing such questions is provided by \emph{resource monotones}: real-valued functions that quantify the amount of resource contained in a state and are monotonically non-increasing under free operations. A canonical example is the \emph{relative entropy of a resource}, defined as
\begin{align}
	D(\rho \| S)
	&=
	\inf_{\mu \in S} D(\rho \| \mu),
\end{align}
where $D(\rho \| \sigma)$ denotes the quantum relative entropy.

Beyond the single-shot setting, one is often interested in asymptotic state transformations involving many copies of a resource state. In this regime, the central quantity of interest is the achievable \emph{conversion rate}. We say that $\rho$ can be transformed into $\sigma$ at rate $\frac{m}{n}$ if
$\rho^{\otimes n} \xrightarrow{(S, F)} \sigma^{\otimes m}$. 
The maximal achievable rate is denoted by $R(\rho \xrightarrow{(S,F)} \sigma)$.
Moreover, we typically allow non-exact transformations in the asymptotic setting; we provide a precise formulation in \ref{supp:preliminaries}.

Our results concern four central aspects of resource manipulation: \vspace{-0.2cm}
\begin{enumerate}[label=\textbf{\Alph*.}, leftmargin=*,itemsep=-2pt]
	\item Existence and properties of extremal composite theories.
	\item Conditions for single-shot state transformations.
	\item Universal upper bounds on asymptotic conversion rates.
	\item Operational advantages unique to composite settings, including remote resource certification.
\end{enumerate}
Each of these aspects is developed in the subsections below.
Importantly, all of our results in this section pertain to convex resource theories.

\subsection{Extremal composite theories}\label{subsec:extremal}

While \cref{lemma:composite-free-states} shows that the set of composite free states is always bounded between two extremal constructions, an analogous maximal set of composite free operations need not exist. More precisely, there is generally no set $\Fmax$ such that, for every composite resource theory $(S,F)$ satisfying \cref{definition: composite RTs}, one has $F \subseteq \Fmax$ (see \ref{supp:no-fmax} for proof and further discussion).
This is a consequence of the fact that $\RNG$s do not behave monotonically with respect to the set of free states, as below.
\begin{lemma}\label{lem:RNG_min_max}
	$S \subseteq S'$ does not imply $\RNG(S) \subseteq \RNG(S')$ nor $\RNG(S') \subseteq \RNG(S)$.
\end{lemma}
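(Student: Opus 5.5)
The statement is a non-implication in both directions, so the plan is to exhibit explicit counterexamples. It suffices to work on a single qubit $\mathcal{H}=\mathbb{C}^2$ with nested convex sets of free states $S \subseteq S'$. For each direction we produce a channel lying in one $\RNG$ set but not the other. A single pair $S \subseteq S'$ witnessing both failures would be most economical, so I will aim for that.

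\textbf{Step 1 (fixing the sets).} Take $S=\{\mathbb{1}/2\}$ and $S'=\{\ketbra{0}, \mathbb{1}/2\}$ convexified, i.e.\ the segment $S'=\{p\ketbra{0}+(1-p)\mathbb{1}/2 : p\in[0,1]\}$. Clearly $S\subseteq S'$, and both sets are convex.

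\textbf{Step 2 ($\RNG(S)\not\subseteq\RNG(S')$).} A channel is in $\RNG(S)$ if and only if it is unital. Pick a unitary channel such as $X(\cdot)X$. It fixes $\mathbb{1}/2$, so it lies in $\RNG(S)$. However, it maps $\ketbra{0}\in S'$ to $\ketbra{1}\notin S'$, so it does not lie in $\RNG(S')$.

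\textbf{Step 3 ($\RNG(S')\not\subseteq\RNG(S)$).} Take the replacement channel $\Lambda(X)=\Tr(X)\ketbra{0}$. It maps every state to $\ketbra{0}\in S'$, so $\Lambda\in\RNG(S')$. It sends $\mathbb{1}/2$ to $\ketbra{0}\notin S$, so $\Lambda\notin\RNG(S)$.

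Both checks are immediate computations, and the RNG definition requires nothing else. The only step needing some care is choosing $S'$ so that it remains a legitimate (convex) free set while still being both strictly larger than $S$ and not invariant under the unitary. The segment construction handles this. If one wants examples in a more physical setting, a natural alternative is to use the composite sets $S=\Smin$ and $S'=\Smax$ from \cref{definition: min max sets} with trivial local theories. In that case $\RNG(\Smax)$ is all channels, which gives one direction trivially via any entangling channel. For the other direction one would use the same qubit example above embedded locally. I will present the single-qubit example as the main proof.
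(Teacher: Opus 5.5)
Your proof is correct and is essentially the paper's: your first counterexample ($\sigma_X$ is unital, hence in $\RNG(\{\mathbf{1}/2\})$, yet sends $\ketbra{0}$ out of $S'$) is exactly the one used there. For the reverse direction the paper switches to the pair $\{\ketbra{0}\}\subseteq\mathcal{S}(\mathcal{H})$, whereas your replacement channel $X\mapsto\Tr(X)\ketbra{0}$ reuses the same pair, so you show slightly more: a single nested pair can already have mutually incomparable $\RNG$ sets.

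One caution about your side remark. With trivial local theories $\RNG(\Smax)$ is the set of all channels, so the pair $\Smin\subseteq\Smax$ can only witness the failure of $\RNG(\Smax)\subseteq\RNG(\Smin)$. To get both failures from a single composite pair you need nontrivial local theories, e.g.\ the local unital theories of \cref{example:no-fmax}. There the unitary rotating $\ket{\Phi}$ to $\ket{00}$ and the replacement channel onto $\ketbra{\Phi}$ give the two failures.
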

\begin{proof}
	A simple single-qubit example suffices. 
	To see that $S \subseteq S'$ does not imply $\RNG(S) \subseteq \RNG(S')$, take $S = \Bqty{\mathbf{1}/2}, S' = \Bqty{\mathbf{1}/2, \ketbra{0}}$, and note that Pauli $\sigma_X$ is in $\RNG(S)$ but not $\RNG(S')$.
	For the second statement, take $S = \Bqty{\ketbra{0}}$ and $S'$ to be the set of all density matrices.
	Then any CPTP map is in $\RNG(S')$, including those that do not preserve $\RNG(S)$.
\end{proof}
This prevents the set of composite free operations from inheriting the extremal bounds on composite free states.
In particular, we find an example where the union of $\RNG(\Smin)$ and $\RNG(\Smax)$ is no longer closed under concatenation.
Moreover, one can also show that in this example, if the free operation set $F$ contains both $\RNG(\Smin)$ and $\RNG(\Smax)$, then any set of states $S$ satisfying $\Smin \subseteq S \subseteq \Smax$ is not preserved by $F$. 

The absence of such a maximal set of free operations obstructs a straightforward derivation of universal resource-theoretic bounds that are independent of the specific composite theory. 
Indeed, if a maximal set $\Fmax$ did exist, then the optimal performance achievable under $\Fmax$ would constitute an ultimate bound on all processes compatible with the given local restrictions since any $\Fmax$-monotone would necessarily be monotonic under every admissible composite resource theory.
This would have mirrored the standard situation in single-resource theories: the performance of a resource theory $(S,F)$ in any given task is upper bounded by that of  $(S,\RNG(S))$.

This issue is particularly poignant in a multi-resource setting such as ours, where it is \emph{a priori} unclear which resource-theoretic structures should be admitted. As an illustrative example, consider a multipartite theory of quantum thermodynamics in which different subsystems are held at different temperatures. For such a theory to be well behaved, one expects that the effective operation on each party reduces to local thermal operations, whenever all other parties are initialized in their respective Gibbs states---this is precisely the requirement captured by the free marginal operations condition \ref{condition:free marginal operations}.
What, then, constitutes the correct set of composite free operations?
Previous works~\cite{Bera_2021,arxiv_Bistron_2024} approached this by selecting particular composite free operation sets, but it is unclear whether there is a universal second law of thermodynamics that is applicable regardless of the particular choice of a composite theory.
Our framework sidesteps this issue and allows a derivation of laws that apply universally, i.e. for any composite theory.

\subsection{Single-shot transformations}
Despite the absence of $\Fmax$, we show that it is still possible to derive universal resource manipulation laws for heterogeneous resources.
We start with a necessary condition on single-shot transformations that depends only on the local structure, irrespective of the composition. 

\begin{theorem}~\label{theorem:single-shot}
  Let $(S, F)$ be a composite resource theory.
  If $\rho \xrightarrow{(S, F)} \sigma$, then $D(\rho \| \Smin) \geq D(\sigma \| \Smax)$.
\end{theorem}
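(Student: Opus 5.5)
The plan is to use the sandwich $\Smin \subseteq S \subseteq \Smax$ together with data processing of the relative entropy. Suppose $\Lambda \in F$ with $\Lambda(\rho) = \sigma$. For any $\mu \in \Smin$, data processing gives $D(\rho\|\mu) \geq D(\Lambda(\rho)\|\Lambda(\mu)) = D(\sigma\|\Lambda(\mu))$. It then suffices to show that $\Lambda(\Smin) \subseteq \Smax$. In that case $D(\sigma\|\Lambda(\mu)) \geq D(\sigma\|\Smax)$ for every $\mu\in\Smin$, and taking the infimum over $\mu$ yields $D(\rho\|\Smin)\geq D(\sigma\|\Smax)$.

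To prove $\Lambda(\Smin)\subseteq\Smax$, I would chain the composite axioms. By condition \ref{condition:free product states}, every product $\bigotimes_i\rho^{(i)}$ with $\rho^{(i)}\in S^{(i)}$ lies in $S$. Since $F(S)\subseteq S$, its image under $\Lambda$ also lies in $S$. By condition \ref{condition:free marginal states}, all marginals of that image are locally free, so the image lies in $\Smax$.

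To pass from products to general elements of $\Smin$, which are convex mixtures of such products, I would use that $\Lambda$ is linear and that $\Smax$ is convex. Convexity of $\Smax$ holds because each $S^{(i)}$ is convex (we work with convex theories) and partial trace is linear. Hence $\Lambda(\Smin)=\conv\Lambda(\text{products})\subseteq\Smax$. Alternatively, one could invoke \cref{lemma:composite-free-states} directly when $S$ is convex, via $\Lambda(\Smin)\subseteq\Lambda(S)\subseteq S\subseteq\Smax$. The route above, however, needs only convexity of the local sets.

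The only subtle point is this convexity step: without convexity of the local $S^{(i)}$, the mixture of product states need not map into $\Smax$. The rest is routine data processing. Another route is to use condition \ref{condition:free marginal operations} on product inputs, but conditions \ref{condition:free product states} and \ref{condition:free marginal states} already suffice. If infima are not attained, the argument goes through unchanged, since it works pointwise for each $\mu$.
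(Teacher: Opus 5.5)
Your proof is correct and is essentially the paper's argument. The paper chains $D(\rho\|\Smin)\ge D(\rho\|S)\ge D(\sigma\|S)\ge D(\sigma\|\Smax)$ via \cref{lemma:composite-free-states}, and explicitly mentions as an alternative the inclusion $\Lambda(\Smin)\subseteq\Smax$ that you establish. Your derivation of that inclusion from (a), $F(S)\subseteq S$, (c), and convexity of the $S^{(i)}$ is valid and slightly more economical: it needs only local convexity rather than convexity of $S$, and convexity of $S$ already forces each $S^{(i)}$ to be convex by (a) and (c).
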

\begin{proof}
  By \cref{lemma:composite-free-states}, $\Smin \subseteq S \subseteq \Smax$, thus for any $\rho$, 
  \begin{align}
    D(\rho \| \Smax)
    \leq
    D(\rho \| S)
    \leq
    D(\rho \| \Smin).
  \end{align}
  Combining this with the monotonicity of $D(\rho \| S)$, i.e. $\rho \xrightarrow{(S,F)} \sigma$ implies $D(\rho \| S) \geq D(\sigma \| S)$, we obtain the claim.  
  Alternatively, we can use conditions~\ref{condition:free marginal states} and \ref{condition:free marginal operations} in \cref{definition: composite RTs} to show that for any $\Lambda \in F$, we have $\Lambda(\Smin) \subseteq \Smax$.
\end{proof}

This theorem may appear redundant as there exists an obviously tighter inequality $D(\rho \| S) \geq D(\sigma \| S)$ for any given composite theory $(S, F)$. 
However, the quantities $D(\rho \| S)$ and $D(\sigma \| S)$ cannot be evaluated using knowledge of the local free state sets $S^{(i)}$ alone, unless the composite free state set $S$ is explicitly specified. Even when $S$ is known, the associated optimization problem may be computationally complex.
By contrast, the quantities $D(\rho \| \Smin)$ and $D(\sigma \| \Smax)$ depend only on the extremal composite free state sets $\Smin$ and $\Smax$, which are fully determined by the local sets. Moreover, optimization over these extremal sets is typically simpler than optimization over an arbitrary composite set $S$.
For example, when we have a network composed of $n$ qubits and $S$ is the set of stabilizer states, $\Smax$ is characterized by $8n$ linear inequalities whereas $S$ is characterized by an intricate condition on the stabilizing group of its extreme points.
In this sense, \cref{theorem:single-shot} provides a practically useful necessary condition for single-shot state transformations in heterogeneous settings.

Beyond its practical utility, \cref{theorem:single-shot} demonstrates the existence of resource manipulation laws that are independent of the choice of composite resource theory. Notably, the only property of the quantum relative entropy used in the proof is the data-processing inequality. 
As a result, we can substitute relative entropy with any divergence satisfying data-processing inequality to obtain an alternative universal necessary conditions for state transformations.

We emphasize that the quantities evaluated on the initial and final states, $D(\rho \| \Smin)$ and $D(\sigma \| \Smax)$, are \emph{not} the same function. 
Consequently, they cannot be interpreted as universal resource monotones. 
Indeed, \cref{example:no-fmax} in \ref{supp:no-fmax} shows that $D(\rho \| \Smin)$ fails to be monotonic for the composite theory $(\Smax, \RNG(\Smax))$, while $D(\rho \| \Smax)$ fails to be monotonic for $(\Smin, \RNG(\Smin))$. Nevertheless, as we show next, these quantities reduce to standard resource monotones in the special case where the state is uncorrelated and only a single subsystem carries the resource.

\begin{lemma}\label{lemma:uncorrelated}
	Let $(S, F)$ be a composite resource theory and $\rho = \bigotimes_{i} \rho^{(i)}$ be an uncorrelated state.
	Suppose that for all $i > 1$, $\rho^{(i)} \in S^{(i)}$ is a free state.
	Then
\begin{equation}
	D(\rho\|S) = D(\rho^{(1)} \| S^{(1)}).
\end{equation}
\end{lemma}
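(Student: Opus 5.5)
The proof sandwiches $D(\rho\|S)$ between two quantities that both equal $D(\rho^{(1)}\|S^{(1)})$.

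\emph{Upper bound.} For any $\mu^{(1)}\in S^{(1)}$, condition~\ref{condition:free product states} of \cref{definition: composite RTs} gives $\mu^{(1)}\otimes\bigotimes_{i>1}\rho^{(i)}\in S$, since each $\rho^{(i)}$ with $i>1$ is free by assumption. The relative entropy is additive on tensor products and vanishes when both arguments coincide. Hence
\begin{equation}
D(\rho\|S)\le D\Big(\rho^{(1)}\otimes\textstyle\bigotimes_{i>1}\rho^{(i)}\,\Big\|\,\mu^{(1)}\otimes\bigotimes_{i>1}\rho^{(i)}\Big)=D(\rho^{(1)}\|\mu^{(1)}).
\end{equation}
Taking the infimum over $\mu^{(1)}\in S^{(1)}$ yields $D(\rho\|S)\le D(\rho^{(1)}\|S^{(1)})$. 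This step uses only the product axiom, not convexity or $\Smin$.

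\emph{Lower bound.} For any $\mu\in S$, condition~\ref{condition:free marginal states} gives $\Tr_{\overline{1}}\mu\in S^{(1)}$. The partial trace $\Tr_{\overline{1}}$ is a CPTP map, so the data-processing inequality gives
\begin{equation}
D(\rho\|\mu)\ge D(\Tr_{\overline{1}}\rho\,\|\,\Tr_{\overline{1}}\mu)=D(\rho^{(1)}\|\Tr_{\overline{1}}\mu)\ge D(\rho^{(1)}\|S^{(1)}).
\end{equation}
Taking the infimum over $\mu\in S$ yields $D(\rho\|S)\ge D(\rho^{(1)}\|S^{(1)})$, which completes the equality. Equivalently, one can invoke $D(\rho\|\Smax)\le D(\rho\|S)$, using $S\subseteq\Smax$ from \cref{lemma:composite-free-states}, and run the same marginal argument on $\Smax$. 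The direct route above avoids needing convexity.

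There is no real obstacle here. The one point to check is the edge case where the relative entropies are infinite, for instance when supports fail to match. Both inequalities hold in the extended reals, so the equality still stands. As a byproduct, the same sandwich shows that $D(\rho\|\Smin)=D(\rho\|\Smax)=D(\rho^{(1)}\|S^{(1)})$ for such $\rho$, because both extremal sets are themselves composite free-state sets.
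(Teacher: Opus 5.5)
Your proof is correct and is essentially the paper's argument: the same product free state $\mu^{(1)}\otimes\bigotimes_{i>1}\rho^{(i)}$ gives the upper bound, and the same data-processing step under $\Tr_{\overline{1}}$ gives the lower bound. The paper runs these on $\Smin$ and $\Smax$ and then sandwiches $S$ via \cref{lemma:composite-free-states}, whereas you apply conditions (a) and (c) of \cref{definition: composite RTs} to $S$ directly, which removes the need for convexity of $S$ and for the infimum over $S^{(1)}$ to be attained; your closing remark about $\Smin$, however, does need the local sets $S^{(i)}$ to be convex, since otherwise $\Smin$ can violate condition (c).
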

\begin{proof}
    Observe that
\begin{align*}\label{eq:single_vs_composite}
	D\pqty{\rho \| \Smin}
	\leq
	D\pqty{\rho^{(1)} \Big\| S^{(1)}}
	&\leq
	D\pqty{\rho \| \Smax}
   \leq
	D\pqty{\rho \| \Smin}.
\end{align*}
The first inequality comes from considering the state $\mu^{(1)} \in S^{(1)}$ that achieves $D(\rho^{(1)}\|\mu^{(1)}) = D(\rho^{(1)}\|S^{(1)})$ and combining with $ D(\rho^{(1)} \| \mu^{(1)}) = D\pqty{ \bigotimes_i \rho^{(i)} \| \mu^{(1)} \otimes \bigotimes_{i>1} \rho^{(i)}} \geq D(\rho\|\Smin)$.
The second inequality comes from data-processing inequality under partial trace $\Tr_{\overline{1}}$ and the third from $\Smin\subseteq\Smax$.
Essentially, we have shown $D(\rho \| \Smin) = D(\rho \| \Smax) = D(\rho^{(1)} \| S^{(1)})$, and the claim follows by noting $\Smin\subseteq S\subseteq \Smax$ from \cref{lemma:composite-free-states}.
\end{proof}

Now, consider the following genuinely heterogeneous resource manipulation task: converting one resource into another.
Alice holds a resource and wishes to convert it into another resource in Bob's system for his use.
The performance of this conversion is bounded by the following corollary of \cref{theorem:single-shot} and \cref{lemma:uncorrelated}.

\begin{corollary}~\label{corollary:conversion-single-shot}
  Let $(S, F)$ be a composite resource theory and $\mu^{(1)} \in S^{(1)}, \mu^{(2)} \in S^{(2)}$ be free states.
  If $\rho^{(1)} \otimes \mu^{(2)} \xrightarrow{(S, F)} \mu^{(1)} \otimes \rho^{(2)}$, then $D(\rho^{(1)} \| S^{(1)}) \geq D(\rho^{(2)} \| S^{(2)})$.
\end{corollary}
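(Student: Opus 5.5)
The plan is to chain \cref{theorem:single-shot} with \cref{lemma:uncorrelated}, applied once to the input and once to the output. The setting is the two-party case, or more generally the case where all other parties hold fixed free states that are passed along.

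First, suppose $\rho^{(1)} \otimes \mu^{(2)} \xrightarrow{(S,F)} \mu^{(1)} \otimes \rho^{(2)}$. Then \cref{theorem:single-shot} gives
\begin{equation*}
D\pqty{\rho^{(1)} \otimes \mu^{(2)} \,\|\, \Smin} \geq D\pqty{\mu^{(1)} \otimes \rho^{(2)} \,\|\, \Smax}.
\end{equation*}

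Second, I evaluate each side with \cref{lemma:uncorrelated}. Its proof actually establishes the stronger chain $D(\rho\|\Smin) = D(\rho\|\Smax) = D(\rho^{(1)}\|S^{(1)})$ for an uncorrelated state whose factors other than the first are free. On the left, the only non-free factor is $\rho^{(1)}$ and $\mu^{(2)} \in S^{(2)}$, so the left-hand side equals $D(\rho^{(1)}\|S^{(1)})$.

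On the right, the resource sits on subsystem $2$. Here I will note that the lemma is symmetric under relabelling the parties: the designation of party $1$ plays no role in its proof. Repeating that argument with the roles of the indices swapped gives
\begin{equation*}
D\pqty{\mu^{(1)} \otimes \rho^{(2)} \,\|\, \Smax} = D\pqty{\rho^{(2)}\|S^{(2)}}.
\end{equation*}
The two ingredients of that argument are as follows.
\begin{itemize}
\item Data processing under $\Tr_{\overline{2}}$ gives the lower bound $D(\cdot\|\Smax) \geq D(\rho^{(2)}\|S^{(2)})$, since marginals of states in $\Smax$ lie in $S^{(2)}$.
\item The product state $\mu^{(1)} \otimes \nu^{(2)} \in \Smin \subseteq \Smax$, with $\nu^{(2)}$ optimal for $\rho^{(2)}$, gives the matching upper bound.
\end{itemize}
In fact only the lower bound is needed for the inequality. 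Combining the two evaluations yields $D(\rho^{(1)}\|S^{(1)}) \geq D(\rho^{(2)}\|S^{(2)})$.

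The only step needing care is this relabelling of \cref{lemma:uncorrelated}, together with the fact that its equalities hold for $\Smin$ and $\Smax$ individually and not just for $S$. I will handle it by quoting the intermediate chain in the lemma's proof directly, rather than its final statement. If more than two parties are present, the additional free factors are carried through unchanged and the same argument applies.
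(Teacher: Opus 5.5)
Your proof is correct and takes the route the paper intends: apply \cref{theorem:single-shot} to the transformation, then evaluate both sides with the chain from the proof of \cref{lemma:uncorrelated}, relabelled so that party 2 carries the resource on the output side. This gives $D(\rho^{(1)}\|S^{(1)}) \geq D(\rho^{(1)}\otimes\mu^{(2)}\|\Smin) \geq D(\mu^{(1)}\otimes\rho^{(2)}\|\Smax) \geq D(\rho^{(2)}\|S^{(2)})$, and you are right that only one direction of each equality is needed.
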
\vspace{0.01cm}

\begin{example}[Conversion between coherence and entanglement]\label{ex:entanglement_coherence_conversion}
    Suppose that parties $1$ and $2$ are governed by coherence and entanglement theories, respectively. 
    For simplicity, assume that $1$ is a one-qubit system and $2 = AB$ is a two-qubit system with a partition $A|B$.
    $S^{(1)}$ is then the set of incoherent qubit states and $S^{(2)}$ is the set of separable two-qubit states.\\
    
    We first consider the conversion from coherence to entanglement, and show that \cref{corollary:conversion-single-shot} gives a tight bound. 
    This was first observed in Ref.~\cite{Streltsov_2015} in a slightly more general setting.
    A maximally resourceful state for party $1$ is $\ket{+}^{(1)} = \frac{1}{\sqrt{2}}(\ket{0}^{(1)}+\ket{1}^{(1)})$ whose relative entropy of coherence $D(\dyad{+}^{(1)} \| S^{(1)}) = 1$.
    Similarly, a maximally resourceful state for party $2$ is $\ket{\Phi^{+}}^{(AB)} = \frac{1}{\sqrt{2}}(\ket{00}^{(AB)}+\ket{11}^{(AB)})$, and its relative entropy of entanglement is also $D(\Phi^{+(AB)} \| S^{(AB)}) = 1$.
    \cref{corollary:conversion-single-shot} implies that at most one copy of $\ket{\Phi^{+}}^{(AB)}$ can be obtained from $\ket{+}^{(1)}$.
    We show that this bound is saturated by constructing a composite theory that allows the single-shot transformation $\ket{+}^{(1)}\otimes\ket{00}^{(AB)} \to \ket{0}^{(1)}\otimes\ket{\Phi^{+}}^{(AB)}$.
    
    Let us start with an arbitrary set of composite free states $S^{(12)}$.
    Next, we define the following CPTP map
    \begin{align*}
        \Lambda^{(12)}(\rho^{(12)})
        &=
        \dyad{0}^{(1)}\otimes\mathcal{U}\left(\Tr_{2}[\rho^{(12)}]\otimes \dyad{0}^{(B)}\right),
    \end{align*}
    where $\mathcal{U}$ relabels $1$ to $A$ and then applies CNOT gate to $AB$ with $A$ as the control.
    Now, take an arbitrary $\mu^{(12)} \in S^{(12)}$.
    Due to condition~\ref{condition:free marginal states} in \cref{definition: composite RTs}, $\Tr_2 \bqty{\mu^{(12)}}$ must be incoherent.
    Furthermore, CNOT gate cannot entangle two incoherent states and thus $\mathcal{U}(\Tr_{2}[\mu^{(12)}]\otimes \dyad{0}^{(B)})$ is a separable state.
    Therefore $\Lambda^{(12)}(\mu^{(12)})$ is a product of two free states, and in particular $\Lambda^{(12)}(\mu^{(12)}) \in S^{(12)}$.
    This implies that $\Lambda^{(12)}$ is an allowed operation for the composite theory $(S^{(12)}, \RNG(S^{(12)}))$.
    We also verify that $\Lambda^{(12)}$ achieves the desired transformation $\ket{+}^{(1)}\otimes\ket{00}^{(AB)} \to \ket{0}^{(1)}\otimes\ket{\Phi^{+}}^{(AB)}$, showing the claim. \\

	Interestingly, we find that the reverse transformation $\ket{0}^{(1)}\otimes\ket{\Phi^{+}}^{(AB)}\to \ket{+}^{(1)}\otimes\ket{00}^{(AB)}$ is forbidden. 
	In fact, it is not possible to convert any entangled state into some coherent state in any composite theory.
	To see this, let us fix an arbitrary $S^{(12)}$.
        Recall if $\mu^{(1)}\in S^{(1)}$ and $\mu^{(2)}\in S^{(2)}$, then any $\Lambda^{(12)}\in\RNG(S^{(12)})$ must transform $\mu^{(1)}\otimes\mu^{(2)}$ into $\mu'^{(12)}$ with the marginals $\mu'^{(1)}\in S^{(1)}$ and $\mu'^{(2)}\in S^{(2)}$.
	However, any (entangled) state $\rho^{(2)}$ can be written as an affine combination of separable states in $S^{(2)}$~\cite{Zyczkowski_1998}. 
	Hence, for any $\rho^{(2)}$, we can find real coefficients $q_{i}$ that sum up to one such that $\Tr_{2}[\Lambda^{(12)}(\mu^{(1)}\otimes\rho^{(2)})] = \sum_{i} q_{i}\mu_i^{(1)}$, and $\mu_i^{(1)}\in S^{(1)}$ are incoherent states. 
	Since any affine combination of incoherent states is also incoherent, we have shown that $\ket{0}^{(1)}\otimes\ket{\Phi^{+}}^{(AB)} \not\to \ket{+}^{(1)}\otimes\ket{00}^{(AB)}$ in any composite theory, i.e. the conversion from entanglement to coherence is impossible. 
\end{example}

The resource-conversion task illustrates a broader principle: a monotone in one resource theory can be turned into a monotone in another in a systematic way.
Special cases of this idea are known: the correspondence between coherence and entanglement for maximally correlated states~\cite{Streltsov_2015,Winter_2016a,Chitambar_2016d} allows construction of a coherence monotone from any entanglement monotone~\cite{Streltsov_2015}; analogous constructions exist for other pairs of resource theories where magic can be measured through non‑Gaussianity~\cite{Hahn_2025}, total correlation through non‑Gaussianity~\cite{arxiv_Kaifeng_Bu_2025}, or non‑Markovianity through entanglement~\cite{Kolondynski_2020}.
Our framework elevates this observation to a general statement valid for \emph{arbitrary} pairs of resource theories.
Furthermore, the induced monotone inherits desirable properties such as continuity and faithfulness from the parent monotone under mild assumptions. 
We refer readers to \ref{supp:monotone} for proof and additional discussions.

\begin{lemma}~\label{proposition:monotone}
  Let $R_2$ be an $(S^{(2)}, F^{(2)})$-monotone.
  Then given any composite theory $(S, F)$,
  \begin{equation}\label{eq:2monotone}
      R_1 (\rho^{(1)}) := \sup
      R_2\pqty{\Tr_1 \bqty{\Lambda(\rho^{(1)} \otimes \mu^{(2)})}}
  \end{equation}
  is an $(S^{(1)}, F^{(1)})$-monotone, where the supremum is taken over all  $\Lambda \in F$, and $\mu^{(2)} \in S^{(2)}$. 
\end{lemma}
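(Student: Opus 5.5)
The plan is to show directly that for any $\Phi \in F^{(1)}$ and any $\rho^{(1)}$ we have $R_1(\Phi(\rho^{(1)})) \leq R_1(\rho^{(1)})$. The argument absorbs the local free operation into the composite one. By condition~\ref{condition:free product operations} of \cref{definition: composite RTs}, together with the fact that $F^{(2)}$ contains the identity (\cref{def:RT}), the product $\Phi \otimes \mathrm{id}^{(2)}$ lies in $F$.

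Take any feasible pair $(\Lambda, \mu^{(2)})$ in the supremum defining $R_1(\Phi(\rho^{(1)}))$. Then
\begin{equation*}
\Lambda\bigl(\Phi(\rho^{(1)}) \otimes \mu^{(2)}\bigr) = \Lambda \circ (\Phi \otimes \mathrm{id}^{(2)})\bigl(\rho^{(1)} \otimes \mu^{(2)}\bigr).
\end{equation*}
Since $F$ is closed under concatenation, $\Lambda' := \Lambda \circ (\Phi\otimes \mathrm{id}^{(2)}) \in F$. So $(\Lambda', \mu^{(2)})$ is a feasible pair for the supremum defining $R_1(\rho^{(1)})$, and it yields the same value of $R_2$. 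Taking the supremum over $(\Lambda,\mu^{(2)})$ gives the desired inequality. This argument uses only that $R_2$ is evaluated on the same state; monotonicity of $R_2$ is not needed for this step.

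The remaining issues are well-definedness and the interpretation of ``monotone''. If the paper also requires monotones to be minimal, or constant, on free states, I would check that $R_1(\mu^{(1)})$ for $\mu^{(1)}\in S^{(1)}$ equals the value of $R_2$ on free states. Condition~\ref{condition:free marginal operations} gives that $X^{(2)} \mapsto \Tr_1 \Lambda(\mu^{(1)}\otimes X^{(2)})$ lies in $F^{(2)}$. Hence $\Tr_1\Lambda(\mu^{(1)}\otimes\mu^{(2)})$ is the image of the free state $\mu^{(2)}$ under a free operation of the second theory, so it lies in $S^{(2)}$. Therefore $R_1(\mu^{(1)}) = \sup_{\sigma\in S^{(2)}} R_2(\sigma)$, attained with $\Lambda=\mathrm{id}$. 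Conversely, $R_1 \geq R_2(\Tr_1(\rho^{(1)}\otimes\mu^{(2)})) = R_2(\mu^{(2)})$, so $R_1$ never drops below the free-state level. If $R_2$ vanishes on $S^{(2)}$, so does $R_1$ on $S^{(1)}$.

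The only delicate point, and the step I expect to be the main obstacle, is making sure the supremum is finite, or else allowing the value $+\infty$. It is also the only place where the marginal axiom \ref{condition:free marginal operations} enters. The monotonicity itself is essentially immediate from axiom~\ref{condition:free product operations} and closure under concatenation.
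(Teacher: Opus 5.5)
Your proposal is correct and follows essentially the same route as the paper: the local free operation $\Phi\in F^{(1)}$ is absorbed into the composite one via $\Lambda' = \Lambda\circ(\Phi\otimes\mathrm{id}^{(2)})\in F$, using condition (b) and closure under concatenation, so every value feasible for $R_1(\Phi(\rho^{(1)}))$ is also feasible for $R_1(\rho^{(1)})$. Your side remarks go beyond what the paper spells out and are correct: the argument never uses monotonicity of $R_2$, and condition (d) gives $R_1(\mu^{(1)})=\sup_{\sigma\in S^{(2)}}R_2(\sigma)$ on free states.
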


\cref{proposition:monotone} implies that the task of measuring one resource from another can be constructed for any composite theory $(S,F)$, albeit the performance of the monotone would depend on the composition.
The tightest one can be constructed by additionally optimizing Eq.~\eqref{eq:2monotone} over all composite theories $(S,F)$.

\subsection{Asymptotic transformations}\label{section:asymptotic}
\subsubsection{Multi-copy vs multi-partite setting}
To address asymptotic transformations, we must extend to multiple copies of a system.
For a homogeneous resource theory, a natural multi‑copy extension exists: given $(S,F)$ with $S\subseteq\mathcal{S}(\mathcal{H})$ and $F\subseteq\mathcal{O}(\mathcal{H})$, there is a family of free state sets $S_n \subseteq \mathcal{S}(\mathcal{H}^{\otimes n})$ and free operations $F_n \subseteq \mathcal{O}(\mathcal{H}^{\otimes n})$. 
The resulting resource theory is then described by the pair $(\bigcup_{n=1}^\infty S_n, \bigcup_{n=1}^\infty F_n)$. 
The Brandão-Plenio axioms ensure that such extensions are well behaved, as discussed in \ref{section:homogeneous-composition}. 
We thus ask whether a composite resource theory with heterogeneous local resources automatically inherits the Brandão-Plenio axioms when each local theory does.

To formulate this question precisely, consider a composite system $\mathcal{H} = \bigotimes_i \mathcal{H}^{(i)}$, with each system $\mathcal{H}^{(i)}$ governed by a local resource theory structure $(\bigcup_{n=1}^{\infty} {S^{(i)}_n}, \bigcup_{n=1}^{\infty}{F^{(i)}_n})$.
For each $n$, we choose a composite theory $(S_n, F_n)$ compatible with the local theory $(S^{(i)}_n, F^{(i)}_n)$ following \cref{definition: composite RTs}. 
Then, $(\bigcup_{n=1}^{\infty} S_n, \bigcup_{n=1}^{\infty} F_n)$ defines a composite theory that can handle multi-copy transformations.
For brevity, we assume that operations changing the number of copies, such as partial trace, are included in this notation.
We show that even if all local extensions $(S^{(i)}_n, F^{(i)}_n)$ satisfy the Brandão-Plenio axioms, a composite theory $(S_n, F_n)$ may not.

\begin{example}[Composition does not preserve Brandão-Plenio axioms]
    Consider parties 1 and 2, each restricted by local resource theories with families of free states $\{S_n^{(1)}\}, \{S_n^{(2)}\}$ satisfying Brandão-Plenio axioms.
    In particular, $S_m^{i1)} \otimes S_n^{(i)} \subseteq S_{m+n}^{(i)}$ for both $i =1,2$.
    First, let $\Smax^{(12)},\Smin^{(12)}$ denote the maximal and minimal composite free state sets on two parties for $n=1$. 
    Next, for $n=2$, we choose composite free states to be
    \begin{equation}
        \hat S^{(12)} = {\rm conv} \left\lbrace \Smax^{(12)} \otimes \Smin^{(12)} \right\rbrace.
    \end{equation}
    In other words, we compose two parties by taking 1) the maximal composition on the first copy of each party, 2) the minimal composition on the second copy, and 3) allowing for convex combinations. 
    Note that this is a valid composite resource theory according to \cref{definition: composite RTs}.

    Now we can choose a composite theory with the following free state sets $S_{1} = \Smax^{(12)}$ for $n = 1$, and $S_{2} = \hat S^{(12)} $ for $n = 2$.
    In this composite theory, $S_{1} \otimes S_{1} \not\subseteq S_{2}$ because for any state $\mu\in S_1$ but $\mu\notin \Smin^{(12)}$, we have $\mu^{\otimes 2} \notin S_2$.
\end{example}

This example confirms that the compatibility conditions in \cref{definition: composite RTs} are logically distinct from the Brandão-Plenio axioms. 
The reason is conceptual: the Brandão-Plenio axioms concern multiple \emph{copies} of a system, while \cref{definition: composite RTs} concerns one copy of a system consisting multiple \emph{parties}. 
Both scenarios involve tensor products mathematically, but they represent different physical notions of composing systems. Motivated by this distinction, we restrict our attention to composite theories $({S_n}, {F_n})$ that satisfy the Brandão-Plenio axioms, ensuring that asymptotic multi-copy transformations are well behaved. 
Crucially, the extremal composite constructions introduced earlier are compatible with this requirement:
suppose for each $n$ we take $\Smin[n] = \conv\{\bigotimes_i \rho^{(i)} \,|\, \rho^{(i)} \in S^{(i)}_n\}$; 
then, when the local theories $\{S_n^{(i)}\}$ satisfy the Brandão-Plenio axioms, so does the sequence $\{\Smin[n]\}$. 
An analogous statement holds for the maximal construction ${\Smax[n]}$.
For notational simplicity, we henceforth write $\bigcup_{n=1}^\infty {S_n}, \bigcup_{n=1}^\infty{F_n}$ simply as $S, F$.

\subsubsection{Upper bounds on asymptotic composite transformation rates}
We now establish an asymptotic analogue of \cref{theorem:single-shot}, that involves the regularized relative entropy of a resource
\begin{align}
  D^{\infty}(\rho \| S)
  &=
    \lim_{n \to \infty} \frac{1}{n} D(\rho^{\otimes n} \| S_n).
\end{align}
In addition to asymptotic continuity, the regularized relative entropy is weakly additive, i.e. $D^{\infty} (\rho^{\otimes k} \| S) = k D^{\infty} (\rho \| S)$.
Hence, it serves perfectly as a monotone for asymptotic transformations~\cite{Brandao_2008,Brandao_2010a,Horodecki_2002,Donald_2002}.

\begin{theorem}~\label{theorem:asymptotic}
  For any composite resource theory $(S, F)$,
  \begin{align}
  R(\rho \xrightarrow{(S, F)} \sigma)
  &\leq
  \frac{D^{\infty}\pqty{\rho \| \Smin}}{D^{\infty}\pqty{\sigma \| \Smax}}.
  \end{align}
\end{theorem}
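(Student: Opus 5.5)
The plan is to mirror the proof of \cref{theorem:single-shot}, with $D^{\infty}(\cdot\|S)$ in place of $D(\cdot\|S)$. There are three ingredients, used in this order:
\begin{enumerate}[label=(\roman*),itemsep=0pt,leftmargin=*]
\item asymptotic monotonicity of the regularized relative entropy under the fixed composite theory $(S,F)$;
\item a sandwich of $D^{\infty}(\cdot\|S)$ between the minimal and maximal composite quantities;
\item a rearrangement of the resulting inequality to isolate the rate.
\end{enumerate}

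For (ii), we work at each copy number $n$. The composite theory $(S_n,F_n)$ is compatible with the local theories $(S^{(i)}_n,F^{(i)}_n)$. Hence \cref{lemma:composite-free-states}, applied level-wise, gives $\Smin[n]\subseteq S_n\subseteq\Smax[n]$. Minimizing the relative entropy over nested sets then yields, for every $n$,
\begin{align}
D(\omega\|\Smax[n])\leq D(\omega\|S_n)\leq D(\omega\|\Smin[n]).
\end{align}
Applying this to $\omega=\rho^{\otimes n}$ and $\omega=\sigma^{\otimes n}$, dividing by $n$ and taking limits gives $D^{\infty}(\sigma\|\Smax)\leq D^{\infty}(\sigma\|S)$ and $D^{\infty}(\rho\|S)\leq D^{\infty}(\rho\|\Smin)$. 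The limits exist because each family satisfies the Brandão-Plenio axioms (we showed this for $\{\Smin[n]\}$ and $\{\Smax[n]\}$, and assumed it for $\{S_n\}$), so the sequences are subadditive and Fekete's lemma applies. If existence were ever in doubt, taking $\limsup$ and $\liminf$ suitably would suffice.

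For (i), fix an achievable rate $r=m/n$. By definition there is a sequence of free operations $\Lambda_k\in F$ with $\Lambda_k(\rho^{\otimes n_k})\approx_{\epsilon_k}\sigma^{\otimes m_k}$, where $\epsilon_k\to0$ and $m_k/n_k\to r$. The following chain of three facts is the standard argument of Refs.~\cite{Brandao_2010a,Horodecki_2002,Donald_2002}.
\begin{itemize}[itemsep=0pt,leftmargin=*]
\item Data processing together with $F(S)\subseteq S$ gives $D(\rho^{\otimes n_k}\|S_{n_k})\geq D(\Lambda_k(\rho^{\otimes n_k})\|S_{m_k})$. Any copy-number-changing free operations are included in $F$, and the Brandão-Plenio axioms keep the output inside the $S_{m_k}$ family.
\item Asymptotic continuity of the relative entropy of resource replaces $\Lambda_k(\rho^{\otimes n_k})$ by $\sigma^{\otimes m_k}$. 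The error is $o(m_k)$, which needs the BP axioms (a full-rank free state exists, so the relative entropy is bounded by $O(m_k)$).
\item Subadditivity, weak additivity of $D^{\infty}$ (quoted before the theorem), and the regularized lower bound $D(\sigma^{\otimes m}\|S_m)\geq m\,D^{\infty}(\sigma\|S)$ (since $D^{\infty}$ is an infimum) then give $D^{\infty}(\rho\|S)\geq r\,D^{\infty}(\sigma\|S)$.
\end{itemize}

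For (iii), combining (i) and (ii) gives $D^{\infty}(\rho\|\Smin)\geq D^{\infty}(\rho\|S)\geq r\,D^{\infty}(\sigma\|S)\geq r\,D^{\infty}(\sigma\|\Smax)$. Dividing by $D^{\infty}(\sigma\|\Smax)$ and taking the supremum over achievable $r$ proves the claim. When the denominator vanishes, the bound is read as $+\infty$ and holds trivially.

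The main obstacle is (i): the asymptotic-continuity step under approximate transformations for an arbitrary composite theory. What it needs is that $\{S_n\}$ satisfies the Brandão-Plenio axioms, which we imposed on composite theories. I would invoke the known Brandão-Plenio argument directly rather than reprove it, checking only that its hypotheses (closedness, tensor stability, partial-trace stability, existence of a full-rank free state) hold for $\{S_n\}$.
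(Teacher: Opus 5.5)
Your proposal is correct and follows the paper's proof. Both invoke the standard bound $R\leq D^{\infty}(\rho\|S)/D^{\infty}(\sigma\|S)$, valid under the Brand\~{a}o--Plenio axioms, and combine it with the level-wise sandwich $\Smin[n]\subseteq S_n\subseteq\Smax[n]$, applied to both $\rho^{\otimes n}$ and $\sigma^{\otimes n}$. Your multiplicative form $D^{\infty}(\rho\|S)\geq r\,D^{\infty}(\sigma\|S)$ is a small tidying that avoids dividing by zero; the one detail worth stating in step (i) is that $n_k\to\infty$, as in the usual definition of a rate, so that $D(\rho^{\otimes n_k}\|S_{n_k})/n_k\to D^{\infty}(\rho\|S)$.
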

\begin{proof}
  We have $R(\rho \xrightarrow{(S, F)} \sigma) \leq \frac{D^{\infty}(\rho \| S)}{D^{\infty}(\sigma \| S)}$ since $(S, F)$ is assumed to satisfy the Brandão-Plenio axioms and thus $D^{\infty}$ is an asymptotically continuous and weakly additive monotone.
  To obtain the claim, it is sufficient to prove
  \begin{align}
    D^{\infty} (\rho \| \Smax)
    \leq
    D^{\infty} (\rho \| S)
    \leq
    D^{\infty} (\rho \| \Smin),
  \end{align}
  which follows from dividing by $n$ and taking $\lim_{n \to \infty}$ in
  \begin{align}
    D(\rho^{\otimes n} \| \Smax[n])
    \leq
    D(\rho^{\otimes n} \| S_n)
    \leq
    D(\rho^{\otimes n} \| \Smin[n]).
  \end{align}
\end{proof}

Again the significance of \cref{theorem:asymptotic} lies in its universality.
In particular, the bound depends only on the local free state families ${S_n^{(i)}}$ through the extremal constructions $\Smin$ and $\Smax$. 
As a result, the asymptotic transformation rate can be bounded without specifying how the local resource theories are exactly composed at the global level.
Moreover, we show that the bound is always non-trivial under mild conditions on the local theories.

\begin{lemma}
  Let $\pqty{S^{(i)}, F^{(i)}}$ be local resource theories whose regularized relative entropies are faithful, i.e. $D^{\infty} (\sigma \| S^{(i)}) = 0$ if and only if $\sigma \in S^{(i)}$.
  Then, $D^{\infty} \pqty{\rho \| \Smax}$ is a faithful monotone.
\end{lemma}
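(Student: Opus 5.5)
The plan is to reduce both directions of faithfulness to the marginals, using only the data-processing inequality under partial trace and the product axiom of the Brandão-Plenio family. Monotonicity is then the standard argument. Throughout I use the multi-copy extremal sets $\Smax[n] = \{\omega \,|\, \Tr_{\overline{i}}\,\omega \in S^{(i)}_n\}$ from \cref{section:asymptotic}, and I assume, as there, that the local families $\{S^{(i)}_n\}$ satisfy the Brandão-Plenio axioms.

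\emph{Monotonicity.} I will read ``monotone'' as monotone under any composite theory whose free states are $\Smax$, in particular $(\Smax, \RNG(\Smax))$. Take $\Lambda_n$ with $\Lambda_n(\Smax[n]) \subseteq \Smax[m]$. For any $\mu \in \Smax[n]$, data processing gives $D(\Lambda_n(\omega)\|\Smax[m]) \le D(\Lambda_n(\omega)\|\Lambda_n(\mu)) \le D(\omega\|\mu)$. Taking the infimum over $\mu$ gives monotonicity of $D(\cdot\|\Smax[n])$. The statement for the regularization follows as in the Brandão-Plenio setting recalled before \cref{theorem:asymptotic}: the sequence $\{\Smax[n]\}$ inherits the axioms from the local families, so the limit exists and is weakly additive.

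\emph{Free states have zero value.} Let $\rho \in \Smax$, so $\rho^{(i)} := \Tr_{\overline{i}}\rho \in S^{(i)}_1$ for every $i$. The $i$-th marginal of $\rho^{\otimes n}$ is $(\rho^{(i)})^{\otimes n}$, which lies in $S^{(i)}_n$ by the tensor-closure axiom $S^{(i)}_m \otimes S^{(i)}_k \subseteq S^{(i)}_{m+k}$. Hence $\rho^{\otimes n} \in \Smax[n]$, so $D(\rho^{\otimes n}\|\Smax[n]) = 0$ for every $n$ and $D^\infty(\rho\|\Smax)=0$.

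\emph{Zero value implies free.} Suppose $D^\infty(\rho\|\Smax) = 0$ and fix a party $i$. For any $\mu \in \Smax[n]$ we have $\Tr_{\overline{i}}\,\mu \in S^{(i)}_n$, so data processing under $\Tr_{\overline{i}}$ gives
\begin{equation}
D(\rho^{\otimes n}\|\mu) \ge D\big((\rho^{(i)})^{\otimes n}\,\big\|\,\Tr_{\overline{i}}\,\mu\big) \ge D\big((\rho^{(i)})^{\otimes n}\,\big\|\,S^{(i)}_n\big).
\end{equation}
Taking the infimum over $\mu$, dividing by $n$ and letting $n\to\infty$ yields $D^\infty(\rho^{(i)}\|S^{(i)}) \le D^\infty(\rho\|\Smax) = 0$. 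This is the same comparison used for the second inequality in \cref{lemma:uncorrelated}. The assumed faithfulness of the local regularized relative entropy then gives $\rho^{(i)} \in S^{(i)}$. Since this holds for all $i$, we get $\rho \in \Smax$ by definition.

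I expect no serious obstacle. The only delicate points are bookkeeping: the regularized limits must exist, and the ``monotone'' claim must be tied to the right class of operations. Both are covered by the inheritance of the Brandão-Plenio axioms by $\{\Smax[n]\}$, which the paper asserts. If that inheritance needed checking, I would verify the tensor-product and partial-trace axioms for $\Smax[n]$ directly from those of the local families, marginal by marginal.
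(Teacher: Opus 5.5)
Your proposal is correct, and its key step---data processing under $\Tr_{\overline{i}}$ to obtain $D^{\infty}(\Tr_{\overline{i}}\rho \,\|\, S^{(i)}) \leq D^{\infty}(\rho \,\|\, \Smax)$, then invoking local faithfulness---is exactly the paper's argument, which the paper phrases contrapositively. The paper leaves two points implicit, and you make both explicit: the converse direction (free states have zero value, via tensor-closure of the local families $S^{(i)}_n$) and monotonicity under $\Smax$-preserving operations.
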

\begin{proof}
	By monotonicity of relative entropy under partial trace, we have
  \begin{align}
    D\pqty{\rho^{\otimes n} \| \Smax[n]}
    &\geq
      \max_i
      D\pqty{\pqty{\Tr_{\overline{i}} \rho}^{\otimes n} \| S^{(i)}}.
  \end{align}
  By definition, $\rho \not\in \Smax$ implies that either $\Tr_{\overline{i}} \rho \not\in S^{(i)}$ for some $i$.
  The claim then follows from the faithfulness of $D^{\infty} (\rho \| S^{(i)})$.
\end{proof}

Unlike the single-shot setting (\cref{theorem:single-shot}) where any divergence satisfying data-processing inequality may  be used, here the regularized relative entropy is singled out.
This is a consequence of the generalized asymptotic equipartition property~\cite{arxiv_Fang_2024a}: suitably smoothed regularizations of many divergences converge to the regularized relative entropy.

Let us examine the implications of \cref{theorem:asymptotic} for networks admitting locally free operations and classical communication (LFOCC). 
Since LFOCC is a subset of LOCC, standard entanglement theory yields the bound $R(\rho \xrightarrow{(\Smin,  LFOCC)} \sigma) \leq D^{\infty}(\rho \| SEP)/D^{\infty}(\sigma \| SEP)$~\cite{Donald_2002}.
However, this only accounts for the limitations due to the correlations contained in $\rho, \sigma$, and is insensitive to the local resources.
This is reflected in that the regularized relative entropy of entanglement of a product state is always zero, regardless of whether it is free or resourceful.
\cref{theorem:asymptotic} provides a complementary bound that solely captures the limitations due to the resources in $\rho, \sigma$, while neglecting the correlation content.
Crucially, the strength of \cref{theorem:asymptotic} is that it applies uniformly to \emph{all} composite theories, not only $(\Smin, LFOCC)$.

Another interesting corollary of \cref{theorem:asymptotic} is the following bound on the asymptotic conversion between different types of local resources.
\begin{corollary}[Resource conversion]~\label{corollary:conversion}
  Let $(S, F)$ be a composite resource theory.
  The resource conversion rate from $\rho^{(1)}$ to $\sigma^{(2)}$ in any composite theory is bounded as $R(\rho^{(1)} \to \sigma^{(2)}) \leq D^{\infty}(\rho^{(1)} \| S^{(1)}) / D^{\infty}(\sigma^{(2)} \| S^{(2)})$.
\end{corollary}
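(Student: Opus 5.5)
The plan is to read \cref{corollary:conversion} as \cref{theorem:asymptotic} applied to the product states $\rho = \rho^{(1)} \otimes \mu^{(2)}$ and $\sigma = \mu^{(1)} \otimes \sigma^{(2)}$, with $\mu^{(1)} \in S^{(1)}$ and $\mu^{(2)} \in S^{(2)}$ free. Here $R(\rho^{(1)} \to \sigma^{(2)})$ means the rate of the composite transformation $\rho \to \sigma$. \cref{theorem:asymptotic} gives
\begin{equation*}
R(\rho^{(1)} \to \sigma^{(2)}) \leq \frac{D^{\infty}(\rho^{(1)}\otimes\mu^{(2)} \| \Smin)}{D^{\infty}(\mu^{(1)}\otimes\sigma^{(2)} \| \Smax)},
\end{equation*}
so it remains to compute the numerator and denominator in terms of local quantities.

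For this I will prove an asymptotic version of \cref{lemma:uncorrelated}. I want the chain
\begin{equation*}
D\pqty{(\rho^{(1)})^{\otimes n}\otimes(\mu^{(2)})^{\otimes n} \| \Smin[n]} \leq D\pqty{(\rho^{(1)})^{\otimes n} \| S^{(1)}_n} \leq D\pqty{(\rho^{(1)})^{\otimes n}\otimes(\mu^{(2)})^{\otimes n} \| \Smax[n]}.
\end{equation*}
This is \cref{lemma:uncorrelated} applied to the $n$-copy system, where party $i$ has Hilbert space $(\mathcal{H}^{(i)})^{\otimes n}$ and local free set $S^{(i)}_n$.

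The first inequality uses the Brandão-Plenio tensor-product axiom for the local family: it gives $(\mu^{(2)})^{\otimes n} \in S^{(2)}_n$. Then, for any $\nu \in S^{(1)}_n$, the state $\nu \otimes (\mu^{(2)})^{\otimes n}$ lies in $\Smin[n]$, and relative entropy is additive on tensor products, so the first inequality follows. The second inequality is data processing under $\Tr_{\overline 1}$, using that marginals of elements of $\Smax[n]$ lie in $S^{(1)}_n$ by definition. Combined with $\Smin[n] \subseteq \Smax[n]$, all three quantities coincide. Dividing by $n$ and letting $n\to\infty$ then gives $D^{\infty}(\rho^{(1)}\otimes\mu^{(2)}\|\Smin) = D^{\infty}(\rho^{(1)}\otimes\mu^{(2)}\|\Smax) = D^{\infty}(\rho^{(1)}\|S^{(1)})$. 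The same argument with the parties swapped gives the corresponding identity for $\mu^{(1)}\otimes\sigma^{(2)}$.

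Substituting these identities into the bound from \cref{theorem:asymptotic} yields the claim. The main point needing care is the bookkeeping of copies. The $n$-copy state $(\rho^{(1)}\otimes\mu^{(2)})^{\otimes n}$ must be regrouped party-wise, by reordering the tensor factors, so that it matches how $\Smin[n]$ and $\Smax[n]$ are defined. Doing so also requires the local families to satisfy the Brandão-Plenio axioms, so that $(\mu^{(i)})^{\otimes n}$ remains free at level $n$. A final edge case is $D^{\infty}(\sigma^{(2)}\|S^{(2)})=0$: the bound is then read as $+\infty$ and says nothing.
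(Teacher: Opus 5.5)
Your proposal is correct and follows the paper's intended route. You apply \cref{theorem:asymptotic} to the product transformation $\rho^{(1)}\otimes\mu^{(2)} \to \mu^{(1)}\otimes\sigma^{(2)}$ and evaluate both regularized quantities through the $n$-copy version of \cref{lemma:uncorrelated}, just as the paper does in its proof of \cref{corollary:assisted-distillation}; your handling of the party-wise regrouping of copies and of $(\mu^{(i)})^{\otimes n}\in S^{(i)}_n$ fills in steps the paper leaves implicit.
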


This bound in \cref{corollary:conversion} is not tight for LFOCC composite theory as the rate $R(\rho^{(1)} \to \sigma^{(2)}) = 0$ for any resourceful $\rho^{(1)}, \sigma^{(2)}$.
This is because LFOCC always apply locally free operation to each party, which cannot transform an initially free local state to a non-free one (see \cref{proposition:lfocc}).
However, the bound can be saturated by certain composite theories, for example with the asymptotically resource non-generating (ARNG) construction~\cite{Brandao_2010a,Brandao_2008,Lami_2025,arxiv_Hayashi_2024a}.

\subsubsection{A universal upper bound for assisted distillation}

A particular resource manipulation task over networks has been studied under the name of assisted distillation~\cite{Chitambar_2016a,Streltsov_2017,Morris_2019,Smolin_2005,Dutil_2011,Lami2020_AssistedGaussian}.
In this setting, a state $\rho^{(AB)}$ is shared between the system $B$ and the assistant $A$.
The operations on the system $B$ are limited by the resource, while the assistant $A$ does not have such restrictions.
The goal is to maximize the assisted distillation rate, i.e. the transformation rate from $\rho^{(AB)}$ to a golden unit of resource in $B$ denoted as $\Phi^{(B)}$.
This setting is in fact identical to a heterogeneous composite resource theory where the local theory $(S^{(A)} = \mathcal{S}(\mathcal{H}^{(A)}), F^{(A)} = \mathcal{O}(\mathcal{H}^{(A)}))$ for $A$ is trivial, while $(S^{(B)},F^{(B)})$ for $B$ is not.
We apply \cref{theorem:asymptotic} to derive a universal bound for the assisted distillation rate. 

\begin{corollary}[Assisted distillation]~\label{corollary:assisted-distillation}
  Let $(S, F)$ be a composite resource theory.
  The assisted distillation rate from $\rho^{(AB)}$ is bounded as
  \begin{equation}
      R(\rho^{(AB)} \xrightarrow{(S, F)} \Phi^{(B)}) \leq \frac{D^{\infty}(\rho^{(AB)} \| \Smin)}{D^{\infty}(\Phi^{(B)} \| S^{(B)})}.
  \end{equation}
\end{corollary}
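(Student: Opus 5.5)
The plan is to apply \cref{theorem:asymptotic} with $\sigma = \mu^{(A)} \otimes \Phi^{(B)}$. Here $\mu^{(A)}$ is any state of the assistant; every such state is free, since $S^{(A)} = \mathcal{S}(\mathcal{H}^{(A)})$. The target of assisted distillation is a golden unit on $B$ with arbitrary leftover on $A$. We may therefore take the target to be $\mu^{(A)}\otimes\Phi^{(B)}$: from any output whose $B$-marginal is $\Phi^{(B)}$ (or close to it), we first discard $A$ and then prepare $\mu^{(A)}$. This postprocessing is a product operation $\mathcal{R}^{(A)}\otimes \mathrm{id}^{(B)}$ with $\mathcal{R}^{(A)}\in F^{(A)} = \mathcal{O}(\mathcal{H}^{(A)})$. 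By condition~\ref{condition:free product operations} it lies in $F$, and by closure under concatenation the achievable rate is unchanged.

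\cref{theorem:asymptotic} then gives
\begin{equation*}
R(\rho^{(AB)} \xrightarrow{(S,F)} \Phi^{(B)}) \leq \frac{D^{\infty}(\rho^{(AB)}\|\Smin)}{D^{\infty}(\mu^{(A)}\otimes\Phi^{(B)}\|\Smax)}.
\end{equation*}
It remains to identify the denominator with $D^{\infty}(\Phi^{(B)}\|S^{(B)})$. This is the regularized analogue of \cref{lemma:uncorrelated}, applied to the product state $(\mu^{(A)})^{\otimes n}\otimes(\Phi^{(B)})^{\otimes n}$ whose $A$-part is free.

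For the lower bound, data processing under $\Tr_A$ gives $D((\mu^{(A)}\otimes\Phi^{(B)})^{\otimes n}\|\Smax[n]) \geq D((\Phi^{(B)})^{\otimes n}\|S^{(B)}_n)$. This holds because every element of $\Smax[n]$ has $B$-marginal in $S^{(B)}_n$. For the upper bound, take an optimizer $\nu_n\in S^{(B)}_n$. Then $(\mu^{(A)})^{\otimes n}\otimes\nu_n \in \Smin[n]\subseteq\Smax[n]$, and it attains the same relative entropy, since the $A$-parts cancel. Dividing by $n$ and taking the limit yields equality of the regularized quantities. Only the lower bound is actually needed for the corollary, because a smaller denominator gives a weaker bound.

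The main obstacle is the first step: checking that the rate to $\Phi^{(B)}$ alone matches the rate to $\mu^{(A)}\otimes\Phi^{(B)}$ within the asymptotic, approximate definition of $R$. In particular, approximate transformations must remain approximate after the postprocessing. This follows from contractivity of trace distance under the channel $\mathcal{R}^{(A)}\otimes\mathrm{id}^{(B)}$ and from the freeness of that channel established above. If one only needs an upper bound, it even suffices to note that any protocol reaching the target on $B$ extends to a protocol reaching $\mu^{(A)}\otimes\Phi^{(B)}$.
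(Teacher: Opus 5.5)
Your proposal is correct and follows the paper's own proof. You replace the target by $\mu^{(A)}\otimes\Phi^{(B)}$ (since $A$ can prepare any state for free), apply \cref{theorem:asymptotic}, and identify the denominator with $D^{\infty}(\Phi^{(B)}\|S^{(B)})$ via \cref{lemma:uncorrelated}. Your explicit $n$-copy derivation of that identity, and your remark that only $D^{\infty}(\mu^{(A)}\otimes\Phi^{(B)}\|\Smax)\geq D^{\infty}(\Phi^{(B)}\|S^{(B)})$ is needed, make precise what the paper does by a one-line appeal to the single-copy lemma.
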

\begin{proof}
  Due to \cref{lemma:uncorrelated}, $D^{\infty}(\mu^{(A)} \otimes \Phi^{(B)} \| \Smax) = D^{\infty}(\Phi^{(B)} \| S^{(B)})$ for any free state $\mu^{(A)} \in S^{(A)}$.
  Since $A$ can prepare any $\mu^{(A)}$ freely, we have
  \begin{align*}
  &R(\rho^{(AB)} \xrightarrow{(S, F)} \Phi^{(B)})\\
  &=
  R(\rho^{(AB)} \xrightarrow{(S, F)} \mu^{(A)} \otimes \Phi^{(B)})
  \\
  &\leq \frac{D^{\infty}(\rho^{(AB)} \| \Smin)}{D^{\infty}(\mu^{(A)} \otimes \Phi^{(B)} \| \Smax)}
  =
  \frac{D^{\infty}(\rho^{(AB)} \| \Smin)}{D^{\infty}(\Phi^{(B)} \| S^{(B)})},
  \end{align*}
  where we used \cref{theorem:asymptotic} for the inequality.
\end{proof}

We remark that \cref{corollary:assisted-distillation} is completely independent of the composite theory and it holds for general resource theories, in contrast to the known assisted distillation bounds derived for specific resources and composite operations.

A well-known specific case is the assisted distillation of coherence, where system $B$ is restricted to incoherent operations (IO)~\cite{Streltsov_2017a,Baumgratz_2014} and the shared initial state $\psi^{(AB)}$ is pure.
Refs.~\cite{Chitambar_2016a,Streltsov_2017} showed that in the LFOCC framework, the assisted distillation rate is given by 
\begin{equation}
    R(\psi^{(AB)} \xrightarrow{(\Smin, LFOCC)} \Phi^{(B)}) = S(\Delta(\psi^{(B)})),
\end{equation}
where $\Delta(X) = \sum_i \ketbra{i} \ev{X}{i}$ is the dephasing map in the incoherent basis.
Furthermore, observe that $D^{\infty}(\Phi^{(B)} \| S^{(B)}) = D(\Phi^{(B)} \| S^{(B)}) = 1$, since relative entropy of coherence is additive~\cite{Gour_2009,Baumgratz_2014}.
Compare this to \cref{corollary:assisted-distillation}, where we have 
\begin{equation}
    R(\psi^{(AB)} \xrightarrow{(S, F)} \Phi^{(B)})
\leq
D^{\infty}(\psi^{(AB)} \| \Smin).
\end{equation}
Remarkably, Ref.~\cite[Theorem~4]{Chitambar_2016a} showed that $
D(\psi^{(AB)} \| \Smin)
=
S(\Delta(\psi^{(B)}))
=
D^{\infty}(\psi^{(AB)} \| \Smin)
$, demonstrating that our bound is saturated by the LFOCC framework.

In fact, we can make a stronger statement, independent of the resource theory at $B$. Using
\cref{lemma:uncorrelated}, we have
\begin{equation}
    D(\Phi^{(B)} \| S^{(B)}) = D(\mu^{(A)} \otimes \Phi^{(B)} \| \hat S) 
\end{equation}
for $\hat S \in \lbrace \Smin,\Smax \rbrace $ and any $\mu^{(A)}$, which implies
\begin{align*}
	\frac
	{D^{\infty}(\rho^{(AB)} \| \Smin)}
	{D^{\infty}(\Phi^{(B)} \| S^{(B)})}
  &=
    	\frac
	{D^{\infty}(\rho^{(AB)} \| \Smin)}
    {D^{\infty}(\Phi^{(B)} \| \Smin)}
    \\
	&=
          R(\rho^{(AB)} \xrightarrow{(\Smin, ARNG(\Smin))} \Phi^{(B)})
          \\
	&\leq
          \sup_{(S, F)} R(\rho^{(AB)} \xrightarrow{(S, F)} \Phi^{(B)})
          \\
	&\leq
	\frac
	{D^{\infty}(\rho^{(AB)} \| \Smin)}
	{D^{\infty}(\Phi^{(B)} \| S^{(B)})}.
\end{align*}
This shows that, whenever the Brandão–Plenio conditions are satisfied, the bound of \cref{corollary:assisted-distillation} is always achievable by the composite theory $(\Smin,\mathrm{ARNG}(\Smin))$. In this sense, the extremal composite theory not only provides a universal bound, but also attains it.

\subsubsection{Assisted distillation as a witness of correlations}
Interestingly, \cref{corollary:assisted-distillation} also allows us to estimate the correlation present in the system.
Recall the following useful equality for relative entropy: when $\sigma^{(AB)} =\sigma^{(A)} \otimes \sigma^{(B)}$,
\begin{align*}
    D(\rho^{(AB)} \| \sigma^{(AB)}) &= D(\rho^{(AB)} \| \rho^{(A)} \otimes \rho^{(B)}) + D_A+D_B,
\end{align*}
where $D_A = D(\rho^{(A)} \| \sigma^{(A)})$ and similarly for $D_B$.
Because $S^{(A)}$ includes all density matrices, $D(\rho^{(AB)} \| \Smin) \leq D(\rho^{(AB)} \| \rho^{(A)} \otimes \rho^{(B)}) + D(\rho^{(B)} \| S^{(B)})$.
Similarly, $D([\rho^{(AB)}]^{\otimes n} \| \Smin) \leq D([\rho^{(AB)}]^{\otimes n} \| [\rho^{(A)}]^{\otimes n} \otimes [\rho^{(B)}]^{\otimes n}) + D([\rho^{(B)}]^{\otimes n} \| S^{(B)})$, and it follows that the regularized quantity $D^{\infty}(\rho^{(AB)} \| \Smin) \leq D(\rho^{(AB)} \| \rho^{(A)} \otimes \rho^{(B)}) + D^{\infty}(\rho^{(B)} \| S^{(B)})$.
Thence, we obtain a quantitative estimate of the correlations in the system from the assisted distillation rate:
\begin{align}
	&\frac
	{D(\rho^{(AB)} \| \rho^{(A)} \otimes \rho^{(B)})}
	{D^{\infty}(\Phi^{(B)} \| S^{(B)})}
    \nonumber \\
	&\quad \geq
	R(\rho^{(AB)} \xrightarrow{(S, F)} \Phi^{(B)})
	-
	\frac
	{D^{\infty}(\rho^{(B)} \| S^{(B)})}
	{D^{\infty}(\Phi^{(B)} \| S^{(B)})}.
\end{align}
Note that $
{D^{\infty}(\rho^{(B)} \| S^{(B)})}/
{D^{\infty}(\Phi^{(B)} \| S^{(B)})}
$ upper bounds the \emph{unassisted} distillation rate.
In this way,  we have shown that the assisted distillation rate is a witness of correlations.
In fact, it is a faithful one:
whenever the assisted distillation rate is strictly higher than the unassisted rate, the initial state must be correlated.
\begin{lemma}
	Suppose the initial state $\rho^{(AB)} = \rho^{(A)} \otimes \rho^{(B)}$ is uncorrelated.
	Then the assisted distillation rate is equal to the non-assisted rate, i.e.
	\begin{align}
		R(\rho^{(A)} \otimes \rho^{(B)} \xrightarrow{(S, F)} \Phi^{(B)})
		&=
		R(\rho^{(B)} \xrightarrow{(S^{(B)}, F^{(B)})} \Phi^{(B)}).
	\end{align}
\end{lemma}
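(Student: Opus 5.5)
We prove the two inequalities separately.

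\emph{Lower bound.} Here we build a composite protocol from any unassisted protocol, using only the product-operation axioms. Suppose $(\rho^{(B)})^{\otimes n} \xrightarrow{(S^{(B)}, F^{(B)})} \Phi^{(B)\otimes m}$, approximately in the sense of the asymptotic definition, via some $\Lambda^{(B)}_n \in F^{(B)}_n$. Since $F^{(A)}_n$ contains the identity, condition~\ref{condition:free product operations} of \cref{definition: composite RTs} gives $\mathrm{id}^{(A)} \otimes \Lambda^{(B)}_n \in F_n$. This map sends $(\rho^{(A)}\otimes\rho^{(B)})^{\otimes n}$ to $(\rho^{(A)})^{\otimes n} \otimes \Phi^{(B)\otimes m}$, up to the same error. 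The output on $B$ is the target; $A$ is the assistant system, whose leftover state is either discarded or irrelevant. If the target is formally $\mu^{(A)}\otimes\Phi^{(B)}$, we instead use a replacement channel on $A$, which is free because $F^{(A)}=\mathcal{O}(\mathcal{H}^{(A)})$. Hence the assisted rate is at least the unassisted rate.

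\emph{Upper bound.} The key observation is that a composite operation acting on a product with a free state on $A$ reduces to a locally free operation on $B$. Because $S^{(A)}_n = \mathcal{S}(\mathcal{H}^{(A)\otimes n})$, the state $(\rho^{(A)})^{\otimes n}$ is itself locally free. Take any composite protocol $\Lambda_n \in F_n$ with $\Lambda_n\bigl((\rho^{(A)})^{\otimes n}\otimes (\rho^{(B)})^{\otimes n}\bigr) \approx \Phi^{(B)\otimes m}$ on $B$, possibly together with something on $A$. Define
\begin{equation*}
\Gamma_n(X^{(B)}) := \Tr_{A}\Lambda_n\bigl((\rho^{(A)})^{\otimes n}\otimes X^{(B)}\bigr).
\end{equation*}
Condition~\ref{condition:free marginal operations} of \cref{definition: composite RTs}, applied with the free input $(\rho^{(A)})^{\otimes n}$ on $A$, gives $\Gamma_n \in F^{(B)}_n$.

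\begin{itemize}
\item Partial trace is contractive in trace distance, so $\Gamma_n\bigl((\rho^{(B)})^{\otimes n}\bigr)$ is as close to $\Phi^{(B)\otimes m}$ as the composite output was to its target.
\item Thus $\Gamma_n$ is an unassisted protocol with the same $m/n$ and the same error.
\item Taking the supremum over protocols gives assisted rate $\le$ unassisted rate.
\end{itemize}

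The step I expect to need the most care is the multi-copy bookkeeping. The composite theory is a family $(S_n,F_n)$, and condition~\ref{condition:free marginal operations} must be applied at level $n$ with local theories $(S^{(i)}_n,F^{(i)}_n)$. If the output copy number differs from $n$, the maps are channels between different numbers of copies, so we need the paper's convention that copy-changing operations are included in $F$. We also need that condition~\ref{condition:free marginal operations} still applies to such copy-changing maps. One also has to check that the error criterion in the rate definition concerns only the $B$ marginal, or that any $A$-part of the target can be fixed arbitrarily. If the definition demands a joint target $\mu^{(A)}\otimes\Phi^{(B)}$, the upper bound only gets easier, since the $B$ marginal error is still bounded by the joint error.

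Combining the two inequalities gives the claimed equality.
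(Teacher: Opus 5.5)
Your proposal is correct and follows the paper's proof. The lower bound uses $\mathrm{id}_A\otimes F^{(B)}\subseteq F$ via condition~\ref{condition:free product operations}. The upper bound applies condition~\ref{condition:free marginal operations} with $(\rho^{(A)})^{\otimes n}$ as a free input on the trivially constrained party $A$, giving an unassisted protocol $\Tr_A\Lambda((\rho^{(A)})^{\otimes n}\otimes\,\cdot\,)\in F^{(B)}$ with the same rate and error. Your explicit appeal to contractivity of the trace norm under $\Tr_A$, and your remarks on copy-number bookkeeping, make explicit details the paper leaves implicit.
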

\begin{proof}
	Clearly, we have
	$
	R(\rho^{(AB)} \xrightarrow{(S, F)} \Phi^{(B)})
	\geq
	R(\rho^{(B)} \xrightarrow{(S^{(B)}, F^{(B)})} \Phi^{(B)})
	$
	since $\mathbf{1}_A \otimes F^{(B)} \subseteq F$.
	Therefore we only need to show
	$
	R(\rho^{(A)} \otimes \rho^{(B)} \xrightarrow{(S, F)} \Phi^{(B)})
	\leq
	R(\rho^{(B)} \xrightarrow{(S^{(B)}, F^{(B)})} \Phi^{(B)})
	$.

    Let us fix an arbitrary $\epsilon, \delta > 0$, and take an achievable rate of assisted distillation $r < R(\rho^{(A)} \otimes \rho^{(B)} \xrightarrow{(S, F)} \Phi^{(B)})$.
    By definition, there exist $m, n$ and $\Lambda^{(AB)}\in F$ such that $\frac{m}{n} \geq r - \delta$, and
    \begin{gather}
	    \norm{\Lambda^{(AB)} \pqty{\bqty{\rho^{(A)}}^{\otimes m} \otimes \bqty{\rho^{(B)}}^{\otimes m}} - \bqty{\Phi^{(B)}}^{\otimes n}}_1
	     \leq \epsilon.
    \end{gather}
    Recalling that the theory at $A$ is trivial and therefore $\rho^{(A)}$ is a free state, condition~\ref{condition:free marginal operations} of \cref{definition: composite RTs} states that the channel $\Lambda^{(B)}(Y) = \Tr_{A}\Lambda^{(AB)}([\rho^{(A)}]^{\otimes n}\otimes Y)$ is a free operation in $F^{(B)}$.
    Since $\epsilon, \delta$ were picked arbitrarily, this shows that $r$ is an achievable rate for the unassisted distillation, i.e. $r \leq R(\rho^{(B)} \xrightarrow{(S^{(B)}, F^{(B)})} \Phi^{(B)})$.
    Taking supremum over all $r < R(\rho^{(A)} \otimes \rho^{(B)} \xrightarrow{(S, F)} \Phi^{(B)})$, we obtain the claim.
\end{proof}

\subsection{Resource certification}

\begin{figure*}
	\includegraphics[width=0.8\textwidth]{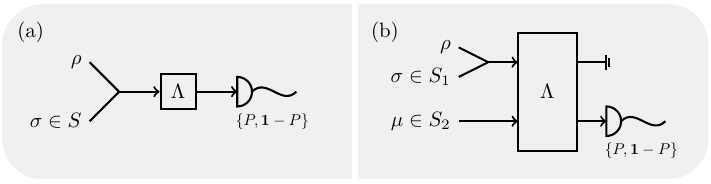}
	\caption{\label{figure:resource-certification}
		Resource certification:
		(a) standard resource certification,
		(b) remote resource certification.
	}
\end{figure*}

We now turn to certifying, or detecting the presence of, quantum resources.
It is instructive to formulate this task within the framework of quantum hypothesis testing~\cite{Gour2024Book} (see \cref{figure:resource-certification}). 
We begin with the standard (non-remote) certification scenario. 
Alice holds a single quantum system governed by a resource theory $(S,F)$. 
She is promised that the system is either in a known resource state $\rho$, or in an unknown free state $\sigma \in S$. 
To decide which is the case, Alice applies a binary POVM $\Bqty{P, \mathbf{1} - P}$: 
outcome $P$ leads her to guess $\rho$, otherwise she guesses that the state is free. 
This decision procedure gives rise to two types of errors:
\begin{itemize}[leftmargin=*,itemsep=0pt]
	\item Type-I error: the state is free ($\sigma \in S$), but Alice incorrectly guesses $\rho$.
	The worst-case probability of this error is $\alpha (P) = \sup_{\sigma \in S} \Tr \pqty{\sigma P}$.
	\item Type-II error: the state is in fact $\rho$, but Alice incorrectly guesses it to be free. 
	This error occurs with probability $\beta (P)=\Tr \pqty{\rho \pqty{\mathbf{1} - P}}$.
\end{itemize}

Various operational setups of this hypothesis-testing task can be considered. 
This includes symmetric testing (minimize average error given priors), asymmetric testing (minimize type-II error subject to an upper bound on type‑I error), and asymptotic tests (study scaling of errors with the number of copies).
For concreteness, we focus on the asymmetric setting, where the figure of merit is
\begin{align}
  D_H^\epsilon (\rho \| S) = -\log \inf_{\substack{0 \leq P \leq \mathbf{1} \\ \alpha(P) \leq \epsilon}} \beta (P),
\end{align}
with a fixed $\epsilon > 0$, which is known as the hypothesis testing relative entropy~\cite{Gour2024Book,Hiai_1991,Ogawa_2005}. 
It quantifies how well $\rho$ can be distinguished from the set of free states and thus provides an operational measure for the resource content of $\rho$.
Indeed, $D_H^\epsilon$ satisfies data-processing inequality and therefore $D_H^\epsilon (\rho \| S)$ is a resource monotone. 

\subsubsection{Remote resource certification}
Let us consider the remote resource certification task. 
Alice holds either the resource state $\rho^{(A)}$ or an unknown free state $\sigma^{(A)} \in S^{(A)}$.
Bob's task is to decide which is the case by performing a local POVM measurement $\Bqty{P^{(B)}, \mathbf{1}_B - P^{(B)}}$ on his subsystem. 
Without additional structure or access to shared correlations, the local POVM does not help Bob at all.

To enable nontrivial strategies, we allow Alice and Bob a joint preprocessing operation $\Lambda^{(AB)}$. 
Consistent with the resource-theoretic constraints, $\Lambda^{(AB)}$ must be a free operation in some composite resource theory $(S, F)$.
In standard (non-remote) hypothesis testing, it follows from data processing inequality that such preprocessing steps cannot enhance the certification task. 
In contrast, preprocessing is necessary for any remote certification strategy that achieves better-than-trivial performance. 
The achievable performance hinges directly on the chosen composition structure. 
Related setups have been considered in Ref.~\cite{Sarkar_2025}, as well as classical settings~\cite{Ahlswede_1986,Ahlswede_2019}.

As in the single-party case, two types of errors arise. Given a measurement $P^{(B)}$ and preprocessing $\Lambda^{(AB)}$:
\begin{itemize}[leftmargin=*]
	\item Type-I error: the system is in some state $\sigma \in S^{(A)}$, but Alice guesses $\rho$.
  The worst-case error is
  \begin{align*}
    \alpha (P^{(B)}, \Lambda^{(AB)})
    &=
    \sup_{\sigma \in S^{(A)}} \Tr \pqty{\Lambda^{(AB)} (\sigma^{(A)}) P^{(B)}}.
  \end{align*}
\item Type-II error: the system is in the state $\rho$, but Alice guesses $S$.
  The error is
  \begin{align*}
    \beta (P^{(B)}, \Lambda^{(AB)})
    &=
    \Tr \pqty{\Lambda^{(AB)}(\rho^{(A)}) \pqty{\mathbf{1} - P^{(B)}}}.
  \end{align*}
\end{itemize}

In the asymmetric setting, the figure of merit is the minimal type-II error with an upper bound on type-I error quantified by
\begin{align}\label{eq:distFeps}
  \dist_F^{\epsilon}(\rho^{(A)} \| S^{(A)})
  &=
  \sup_{\substack{\Lambda \in F^{(AB)}\\ 0 \leq P \leq \mathbf{1}_B \\ \alpha(P, \Lambda) \leq \epsilon}}
  -\log \beta \pqty{P, \Lambda}.
\end{align}
Intuitively, we expect the optimal performance to be bounded by the distinguishability of $\rho^{(A)}$ from $S^{(A)}$.
This is indeed the case:
\begin{lemma}\label{proposition:remote-certification-optimal}
  Let $(S, F)$ be a composite resource theory. Then
  \begin{align}
    \dist_F^{\epsilon}(\rho^{(A)} \| S^{(A)})
  &\leq
    D_H^\epsilon (\rho^{(A)} \| S^{(A)}).
  \end{align}

\end{lemma}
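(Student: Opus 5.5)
The plan is to show that every remote strategy $(\Lambda^{(AB)}, P^{(B)})$ can be simulated by a purely local test on Alice's system with the same type-I and type-II errors. The bound then follows by comparing the two optimizations, since the right-hand side $D_H^\epsilon(\rho^{(A)}\|S^{(A)})$ optimizes over all local tests on $A$.

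First I will fix the interpretation of the preprocessing. In the remote setting, $\Lambda^{(AB)}(\sigma^{(A)})$ stands for $\Lambda^{(AB)}(\sigma^{(A)} \otimes \mu^{(B)})$ for some fixed state $\mu^{(B)}$ that Bob supplies; it may be a free state $\mu^{(B)}\in S^{(B)}$, though this is not needed. Then
\begin{equation*}
\Gamma(X^{(A)}) := \Lambda^{(AB)}\pqty{X^{(A)}\otimes \mu^{(B)}}
\end{equation*}
is a CPTP map from $A$ to $AB$. If instead $\Lambda^{(AB)}$ is meant directly as a channel from $A$ to $AB$, I simply set $\Gamma = \Lambda^{(AB)}$.

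Next I will define the effective operator on $A$,
\begin{equation*}
Q^{(A)} := \Gamma^\dagger\pqty{\mathbf{1}_A \otimes P^{(B)}},
\end{equation*}
where $\Gamma^\dagger$ is the Heisenberg-picture adjoint. Because $\Gamma$ is completely positive and trace preserving, $\Gamma^\dagger$ is positive and unital. Together with $0 \le \mathbf{1}_A\otimes P^{(B)} \le \mathbf{1}_{AB}$, this gives $0 \le Q^{(A)} \le \mathbf{1}_A$, so $\Bqty{Q^{(A)}, \mathbf{1}_A - Q^{(A)}}$ is a valid binary POVM. By duality, for every state $\tau^{(A)}$ we have $\Tr\pqty{\tau Q} = \Tr\pqty{\Gamma(\tau)(\mathbf{1}_A\otimes P^{(B)})}$. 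Applying this to the relevant states gives
\begin{equation*}
\alpha(Q) = \sup_{\sigma\in S^{(A)}} \Tr(\sigma Q) = \alpha(P^{(B)},\Lambda^{(AB)}), \qquad \beta(Q) = \beta(P^{(B)},\Lambda^{(AB)}).
\end{equation*}
The second equality uses $\Tr\pqty{\Gamma(\rho)} = 1$, so that $\Tr\pqty{\rho(\mathbf{1}-Q)} = \Tr\pqty{\Gamma(\rho)\,(\mathbf{1}-\mathbf{1}_A\otimes P^{(B)})}$.

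It follows that any pair $(\Lambda, P)$ feasible in Eq.~\eqref{eq:distFeps}, meaning $\alpha(P,\Lambda) \le \epsilon$, yields a $Q$ that is feasible for $D_H^\epsilon$ and has the same type-II error. Hence
\begin{equation*}
-\log\beta(P,\Lambda) = -\log\beta(Q) \le D_H^\epsilon(\rho^{(A)}\|S^{(A)}).
\end{equation*}
Taking the supremum over all feasible $(\Lambda,P)$ gives the claim. This argument never uses the composite axioms, and it does not use the fact that $\Lambda \in F$. The bound therefore holds even for arbitrary channels, which is consistent with the intuition that preprocessing cannot beat local distinguishability.

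The only step I expect to need care is the first one: making precise what $\Lambda^{(AB)}(\sigma^{(A)})$ means, that is, how Bob's input is initialized. Once $\Gamma$ is fixed as a CPTP map, the pullback of the POVM element is routine.
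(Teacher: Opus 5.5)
Your proof is correct and is essentially the paper's argument. The paper rewrites $\dist_F^\epsilon$ as $\sup_{\Lambda} D_H^\epsilon\bigl(\Tr_A\Lambda(\rho^{(A)})\,\|\,\Tr_A\Lambda(S^{(A)})\bigr)$ and applies the data-processing inequality for $D_H^\epsilon$ under $\Tr_A\circ\Lambda$; you prove that data-processing step directly by pulling Bob's test back to $Q^{(A)}=\Gamma^\dagger(\mathbf{1}_A\otimes P^{(B)})$, and, as in the paper, neither the composite axioms nor the condition $\Lambda\in F$ is used.
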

\begin{proof}
  From Eq.~\eqref{eq:distFeps} we can write
  \begin{align*}
      &\dist_F^{\epsilon}(\rho^{(A)} \| S^{(A)})\\
   &=
   \sup_{\Lambda \in F^{(AB)}}
   D^\epsilon_{H} \pqty{ \Tr_A \Lambda(\rho^{(A)}) \| \Tr_A \Lambda(S^{(A)})}.
  \end{align*}
  Using the data processing inequality on $D_H^{\epsilon}$ immediately proves the statement. 
\end{proof}

\subsubsection{A case study for coherence and imaginarity}
To see how the choice of composition affects remote certification, consider an example.
First, let Alice be constrained by the resource theory of strictly incoherent operations (SIO)~\cite{Winter_2016a,Yadin_2016a}. 
Given a preferred basis $\Bqty{\ket{i}}$, the set of SIO free states are those incoherent with respect to this basis, i.e. $\mu = \sum_i p_i \ketbra{i}$.
SIO free operations are those with a Kraus decomposition $\Bqty{K_i}$, such that for any incoherent state $\mu$, the operators $K_i \mu K_i^\dagger$ and $K_i^\dagger \mu K_i$ are all incoherent. 
The first condition implies that under any selective measurement, incoherent states are preserved.
Similarly, by looking at the set of adjoint maps we can interpret the second condition as an analogous requirement on the set of incoherent measurements, i.e. measurements whose POVM elements are diagonal.
Physically, SIO captures the setting in which one cannot prepare coherent states and is restricted to incoherent measurements.

Next, consider Bob being constrained by the resource theory of imaginarity~\cite{Hickey_2018}. 
This is a resource-theoretic formulation of the role of complex numbers in quantum mechanics.
Similar to coherence theory, we start with a preferred basis $\Bqty{\ket{i}}$.
Free states are those whose matrix elements in this basis are real, i.e. $\bra{i}{\mu}\ket{j} \in \mathbb{R}$ for all $i, j$.
Free operations are the real operations, which have several equivalent definitions~\cite{Hickey_2018,Kondra_2023}:
(1) there is a set of Kraus operators $\Bqty{K_i}$ with only real matrix elements,
(2) for any $\rho$, $\Lambda(\rho^T) = \bqty{\Lambda(\rho)}^T$, where ${}^T$ denotes transpose in the preferred basis, and
(3) for any real state $\rho^{(12)}$, $\pqty{\mathbf{1}_1 \otimes \Lambda_2} (\rho^{(12)})$ is a real state.
Finally, we define a set of measurements that can be implemented at no cost, i.e. the set of real measurements.
These are simply measurements whose POVM elements only have real-valued entries.

For simplicity, suppose both Alice and Bob hold a single qubit, and the resource state that is to be certified on Alice's side is $\rho^{(A)} = \ketbra{+y}^{(A)}$, where $\ket{+y}=\frac{1}{\sqrt{2}}(\ket{0}+i\ket{1})$.
Locally, Alice alone cannot distinguish $\rho^{(A)}$ from the set of incoherent states $S^{(A)}$ with incoherent measurements.
Similarly, if she simply sends the state to Bob, he also cannot distinguish the state $\rho^{(A)}$ from the set of incoherent states $S^{(A)}$ with real measurements.
However, if Alice performs a $\pi/2$-rotation along $z$-axis and then sends the state to Bob, then he can perform an $X$ measurement to saturate the bound in \cref{proposition:remote-certification-optimal}.
This preprocessing can be included in the free operation as it always maps Alice's incoherent state into the same incoherent (and thus real) state for Bob.

This example highlights the essential role of joint preprocessing in remote certification: enlarging the set of admissible composite free operations can strictly improve performance. In \ref{supp:remote-certification}, we analyze this example in detail and demonstrate a quantitative gap between LFOCC-based preprocessing and RNG-based strategies.

\section{Discussions and Conclusion}

We have developed an axiomatic theory of heterogeneous quantum resources on networks. 
The central strength of this approach lies in its flexibility: we impose only minimal consistency requirements that exclude the pathological generation of resources for free, without committing to specific capabilities for quantum communication or global operations across the network. 
In fact, we provide results for both classical networks (LFOCC) and fully quantum ones constrained only by the scarcity of resources (RNG), and any realistic scenario would lie between these extremes.  

In this general setting, resource manipulation acquires qualitatively new features in heterogeneous networks. 
Distinct resources are governed by different mathematical structures, and that composing or hybridizing such structures introduces genuinely new constraints and possibilities for state transformations.
We demonstrate this explicitly through operational tasks, such as resource conversion, assisted distillation, and remote certification. 
Instead of adopting implicit or theory-specific composition rules, as is common in prior works, we identify the extra degrees of freedom inherent in the act of composing quantum resources.

This perspective yields new insights even for the special case of homogeneous resources on networks. 
For instance, Ref.~\cite{Son_RobustCat} classified resources by extensiveness, the possibility of catalytic operations, and broadcastability assuming conventional composition rules for each resource. 
However, our network-based formulation shows that these operational properties depend not only on the intrinsic nature of a resource, but also on the compositional structure imposed by the network. In this sense, composition is elevated from a \emph{background, context-dependent choice} into a \emph{central, new physical ingredient} that directly influences the laws of multipartite resource manipulation.

Building on this viewpoint, we derived universal bounds for simultaneous manipulation of multiple resources in both single-shot and asymptotic regimes. Crucially, our bounds are independent of any particular composite theory and depend only on local resource structures. Hence, they establish general laws that apply across a wide range of network architectures. We view these results as a first step toward an emerging theory of multi-resource interconversion, where a wealth of trade-offs, synergies, and incompatibilities between resources can be systematically quantified.

We highlight a few specific directions that naturally emerge from this work. 
One immediate opportunity is to develop a resource-theoretic model of quantum thermodynamic systems having access to multiple heat baths at different temperatures.
Typical resource‑theoretic treatments struggle to accommodate multiple infinite heat baths within a single framework.
While this scenario has been analyzed with other approaches~\cite{Scovil_1959,Mazza_2014,Kosloff_2014,Vinjanampathy_2016}, a resource-theoretic model would allow us to derive relevant laws of thermodynamics. 

Another interesting direction concerns catalysis---the phenomenon where the presence of a resource state enables an otherwise forbidden transformation~\cite{LipkaBartosik2023CatReview}. 
Traditionally, catalytic transformations are studied under the assumption that both target system and catalyst carry the same type of resource.
Our framework removes this restriction and opens the door for studying \emph{heterogeneous} catalysis, where the catalyst carries a different resource from the target system.
Although not explicitly formulated, some examples of robust catalysis identified in Ref.~\cite{Son_RobustCat} hint at this possibility.
However, a systematic investigation of how heterogeneous catalysts enhance performance or enable otherwise forbidden transformations is an intriguing, largely unexplored direction.

Finally, the network-based perspective developed here naturally accommodates genuinely multipartite tasks, such as resource allocation~\cite{Salazar2021optimalallocationof} and multipartite resource broadcasting. More broadly, our framework supports complex architectures in which a hypergraph of distinct resource theories governs a quantum network, which is an abstraction well suited for near-term quantum internets. In such settings, the tools introduced in this work, including remote resource certification and monotones translated across resource theories, are expected to play a central role in optimizing distributed quantum processing capabilities.

\section{Acknowledgements}
The authors thank Chung-Yun Hsieh, Hyukjoon Kwon Soojoon Lee, Paul Skrzypczyk, and Ryuji Takagi for helpful comments and discussions.
RG, JS, JC and NN are supported by the start-up grant of the Nanyang Assistant Professorship at the Nanyang Technological University in Singapore. NN acknowledges the National
Research Foundation, Singapore (W24Q3D0001) and the Tier 1 MOE grant RT1/23 ``Catalyzing quantum security: bridging between theory and practice in quantum communication protocols".
S.H.L is supported by Global Partnership Program of Leading Universities in Quantum Science and Technology (RS-2025-02317602) and Institute of Information and Communications Technology Planning and Evaluation (IITP) Grants (No. RS-2025-25464492, "Hybrid quantum resource theories for robust quantum catalysts", No. RS-2025-02283189, ``Quantum Metrology Theory Based on Temporal Correlation Operators''), and the start-up grant of Ulsan National Institute of Science and Technology, South Korea.
\bibliography{composite_ref.bib}

\clearpage
\newpage
\setcounter{page}{1}
\title{Supplemental Materials for ``Manipulating heterogeneous quantum resource over a network"}
\renewcommand{\thetheorem}{S\arabic{theorem}}
\setcounter{theorem}{0}
\renewcommand{\thelemma}{S\arabic{lemma}}
\setcounter{lemma}{0}
\renewcommand{\thecorollary}{S\arabic{corollary}}
\setcounter{corollary}{0}
\renewcommand{\theequation}{S\arabic{equation}}
\setcounter{equation}{0} 
\renewcommand{\theremark}{S\arabic{remark}}
\setcounter{remark}{0} 
\renewcommand{\theproposition}{S\arabic{proposition}}
\setcounter{proposition}{0} 
\renewcommand{\thedefinition}{S\arabic{definition}}
\setcounter{definition}{0} 
\setcounter{section}{0}
\renewcommand{\thesection}{Supplementary Note \arabic{section}}
\maketitle
\onecolumngrid

\section{Overview of existing approaches to multi-resource theories}\label{supp:related-works}

The interaction between different resource theories has been previously studied in several different forms. We discuss them in conjunction with our framework here:

\begin{enumerate}[leftmargin=*]
	\item \textbf{Multiple resource theories imposed \emph{on the same system}.} --- Closest in spirit to our work is the axiomatic multi-resource theory framework of Refs.~\cite{Sparaciari_2020,Sparaciari_2017}. Concretely, suppose that $\pqty{S_1, F_1}, \pqty{S_2, F_2}$ are two different resource theoretic structures, capturing different operational limitations.
	In this framework, one is subject to both limitations, so the only allowed operations are \mbox{$F = F_1 \cap F_2$}, such that they preserve both $S_1$ and $S_2$. A prototypical example of this framework is the GPC model of quantum thermodynamics, where model thermodynamically free operations are modelled by both Gibbs-preservation (GP) and covariance with respect to time-translation~\cite{Cwiklinsky2015EnTO}. Within this framework, it was shown that it is possible to convert between different resources, through a construction of \emph{batteries} and \emph{bank states}.
	Batteries play the mediating role of storing and releasing resources of a certain type, while bank states enable the interconversion of resources in different batteries, i.e. $\textrm{bank}^{\otimes n} \otimes \textrm{battery}_1 \otimes \textrm{battery}_2 \to \textrm{bank}'^{\otimes n} \otimes \textrm{battery}'_1 \otimes \textrm{battery}'_2$ (see the discussion around Eq.~(40) in Ref.~\cite{Sparaciari_2020}).
	Furthermore, they derive a general upper bound on this resource conversion, which is dubbed the \emph{first law of general quantum resource theories}. There is still much potential in discovering intricate structures in such settings, as recently explored in \cite{koukoulekidis2025symmetry,hu2026gaussian} for non-Gaussianity and asymmetry.
	
	The main difference between the framework of Ref.~\cite{Sparaciari_2020,Sparaciari_2017} and this work lies in the choice of simplifying assumptions.
	The former focuses on specific resource theories possessing convenient mathematical properties, in particular those that admit reversible manipulation. For example, this is implied by the requirement that the theories satisfy the ``asymptotically equivalent property'' (see Definition 1).
	While many resource theories admit such reversible variants for theoretical analysis, some physically important examples—most notably entanglement—do not~\cite{Lami2023}, unless under specific assumptions~\cite{ganardi2025second}. This limits the applicability of that framework to particularly well-behaved resources.
	In contrast, we adopt a distributed setting with multiple parties, where each local subsystem is equipped with its own resource-theoretic structure. We show that this additional structure allows us to derive novel results such as resource conversion laws, without \textit{a priori} assuming reversibility.
	Furthermore, we show that our distributed setting gives rise to improved \emph{tight} bounds on tasks such as assisted distillation and novel scenarios such as remote resource certification.
	
	\item \textbf{Resource engines (sequential composition of free operations)}. --- A contrasting approach is taken in Ref.~\cite{WS2024}.
	Here, the different resource theories are \emph{fused} in an engine-like fashion, where a free operation can consist of many strokes, and in the $i$-th stroke we can perform a free operation from the $i$-th theory.
	The primary motivation here is to model a heat engine, where the system can be coupled to a bath at a different temperature during each stroke.
	The paper focused on studying this model in several interesting scenarios, such as thermodynamics with two different baths and coherence theory with two different bases. In a similar spirit, one can consider the capability of one set of free operations in generating a second type of resource~\cite{de2024entanglement,deneris2025analyzing}.

		This setting is fundamentally different from ours, as each stroke acts on a single system. Moreover, resource creation is generally allowed, since a free state of one theory need not be free in another. While physically well motivated and promising, this approach explores a complementary aspect of combining resource theories. From the perspective of our framework, it can be viewed as corresponding to the strong assumption that unlimited free quantum communication is available between different parties in a network.
	
	\item \textbf{Extended free state sets (allowing correlations, or union of free states)} --- A complementary approach comes from the consideration of thermodynamical systems that might be correlated~\cite{Bera_2017a,Bera_2021}.
	It has been shown that these correlated systems might break the usual laws of thermodynamics, manifesting new phenomena such as anomalous heat flow~\cite{Jennings_2010} and violation of the Landauer bound~\cite{Rio_2011}.
	Ref.~\cite{Bera_2017a} considers systems that are correlated with the bath, and derives general laws of thermodynamics under entropy-preserving operations that remain valid in this regime.
	Ref.~\cite{Faist_2019,Sagawa_2021} consider transformations between multipartite, interacting systems whose thermal state is correlated.
	Ref.~\cite{Bera_2021} extends the usual resource theory framework for thermodynamics to handle multiple baths at different temperature, and derive a generalized second law statement.
	
	\item \textbf{Local restrictions on quantum operations in LOCC frameworks. ---} There is also extensive work in entanglement theory with additional resource-theoretic restrictions on the local operations, such as Gaussian operations~\cite{Fiurasek_2002,Eisert_2002,Giedke_2002}, incoherent operations~\cite{Chitambar_2016b,Streltsov_2017}, thermal operations~\cite{arxiv_Bistron_2024,Horodecki_2003b}, unitals~\cite{Ganardi_2025a}, or stabilizer operations~\cite{Andi_Gu_2025}.
	Our results encompass these settings, but also extend beyond non-LOCC based settings, such as the framework considered in~\cite{Bera_2021}.
	A simple example of the setting that is covered by our framework but not LOCC-based results is the following: suppose that in addition to locally free operations and classical communication, parties also have access to locally-free entanglement.
	This is beyond the scope of LOCC-based results, but is well-motivated in some settings (a network of photonic quantum computers might share two-mode squeezed vacuum states freely) and can be treated appropriately in our framework.
\end{enumerate}

\section{Technical preliminaries and definitions}\label{supp:preliminaries}

We consider quantum systems living in finite $n$-dimensional Hilbert spaces denoted by $\mathcal{H}\cong \mathbb{C}_n$.
Quantum states are represented by positive semidefinite operators with unit trace, which we denote by $\mathcal{S}(\mathcal{H})$, while quantum channels are completely positive, trace-preserving (CPTP) maps, which is denoted by $\mathcal{O}(\mathcal{H} \to \mathcal{H'})$.
For identical input and output systems, we write $\mathcal{O}(\mathcal{H})$ for brevity.

Most quantum resource theories are defined not only on a single system $\mathcal{H}$, but also on any composite system $\mathcal{H} \otimes \mathcal{H'}$.
As such, technically we have to define the set of free states $S$ as a \emph{function} of the Hilbert space, where for each Hilbert space $\mathcal{H}$, there is a subset of states $S(\mathcal{H}) \subseteq \mathcal{S}(\mathcal{H})$ that is free.
Similarly, the set of free operations is a function of input and output Hilbert spaces, i.e. $F(\mathcal{H} \to \mathcal{H'}) \subseteq \mathcal{O}(\mathcal{H} \to \mathcal{H'})$. We will often drop the underlying Hilbert spaces from $S(\mathcal{H}), F(\mathcal{H} \to \mathcal{H'})$ when it is clear from the context.

When considering resource theories on a composite system of multiple parties $\mathcal{H} = \bigotimes_i \mathcal{H}^{(i)}$, we denote the \emph{local} resource theories acting on $\mathcal{H}^{(i)}$ by an upper index $(S^{(i)}, F^{(i)})$, that is, for each $i$ we have a set of free states and operations:
\begin{equation}
	S^{(i)} \subseteq \mathcal{S}(\mathcal{H}^{(i)}) ,\qquad F^{(i)} \subseteq \mathcal{O}(\mathcal{H}^{(i)}).
	\end{equation}
We reserve the notation $(S, F)$ for an arbitrary \emph{composite} resource theory acting on the whole system $\mathcal{H}$ (see \cref{definition: composite RTs}).

Throughout the manuscript, we draw on several resource theories that have been extensively studied in the literature for illustration and example. For detailed overviews of these theories, we refer the reader to the corresponding reviews~\cite{Bartlett2007, Horodecki_2009a, Streltsov_2017a, Lostaglio_2019, Chitambar_2019}. For a given resource theory $(S, F)$, we say that $\rho$ can be transformed to $\sigma$ if there exists $\Lambda \in F$ such that $\Lambda(\rho) = \sigma$, or more compactly $\rho \xrightarrow{(S, F)} \sigma$.
We say that $R: \mathcal{S}(\mathcal{H}) \to \mathbb{R}$ is an $(S, F)$-monotone if for any $\rho, \sigma \in \mathcal{S}(\mathcal{H})$ such that $\rho \xrightarrow{(S, F)} \sigma$, we have $R(\rho) \geq R(\sigma)$.
For any monotone, the minimal value over all states is always reached on the set of free states (if the set of free operations include the preparation of any free states), so by convention we usually set $R(\sigma) = 0$ for any $\sigma \in S$.
When $R(\sigma) = 0$ if \emph{and only if} $\sigma \in S$, then we say that $R$ is a \emph{faithful} monotone.

In addition to the aforementioned single-shot setting, typically, we are also interested in transformations of multiple copies of a certain state.
Thus, if the state of a single copy of the system lives in $\mathcal{S}(\mathcal{H})$, we implicitly assume that for any $n$, there is a naturally defined set of free states $S_n \subseteq \mathcal{S}(\mathcal{H}^{\otimes n})$, and analogously for $F$
(note that the index for multiple copies is at the bottom).
With a slight abuse of notation, we will denote $\bigcup_{n=1}^\infty {S_n}, \bigcup_{n=1}^\infty{F_n}$ as $S, F$ when it is clear from the context that we are considering multi-copy transformations.

In many theories, any two arbitrary states $\rho, \sigma$ can be transformed in this regime, i.e. we have $\rho^{\otimes m} \xrightarrow{(S, F)} \sigma^{\otimes n}$ for some appropriate $m, n \geq 1$.
Therefore, in this setting, the main quantity of interest is the \emph{rate} of transformation.
We say that $r$ is an achievable rate for the asymptotic transformation $\rho$ to $\sigma$ if for any $\epsilon, \delta > 0$, there exist $m, n \in \mathbb{N}$ and $\Lambda \in F$ such that
\begin{align}
  \norm{
  \Lambda(\rho^{\otimes m}) - \sigma^{\otimes n}
  }_1 &\leq \epsilon,
  \qquad 
  \frac{m}{n} \geq r - \delta.
\end{align}
We denote the supremum of all achievable $r$ by $R(\rho \xrightarrow{(S, F)} \sigma)$.

\section{Composing resource theories}\label{supp:composition}
Section \ref{sec:composing} lies out our axiomatic approach towards composing resource theories in full generality. In this section, we elaborate in detail on the simplest composite theory, i.e. entanglement theory, which is extremely well-studied and serves as a guiding case study. We then briefly review and discuss composition of homogeneous resources which are usually implicitly assumed for distinct quantum resources, in which the Brandão-Plenio axioms serve as a key guiding set of principles. The last subsection contains a few extra technical remarks for the composition of heterogeneous resources.

\subsection{Baby's first composite resource theory: Entanglement}

Entanglement theory is perhaps the prototype of a theory of composite systems, with a focus on the non-classical correlations that such systems can have~\cite{Horodecki_2009a}.
In standard entanglement theory, we assume that we can implement any local operation (LO) on any party.
Additionally, we typically allow classical communication (CC) which allows a party to implement a local operation condition on the outcome of a measurement performed by a different party.
The set of (free) operations that can be implemented in such a way is called local operations and classical communications (LOCC)~\cite{Bennett_1996b,Donald_2002,Chitambar_2014}.
The set of free states are the so-called separable states~\cite{Werner_1989}, i.e. the convex combination of product states $\mu = \sum_j p_j \bigotimes_{i} \rho_{j}^{(i)}$.
Together, they give rise to the laws of entanglement manipulation, which have been highly influential in the field of quantum information.
Some entanglement quantifiers such as entanglement entropy and purity of entanglement have even found applications in fields like many-body physics~\cite{Eisert_2010}, black hole physics~\cite{Srednicki_1993,Bombelli_1986}, and high-energy physics~\cite{Ryu_2006}.
The techniques developed in entanglement theory have shaped and will continue to shape our understanding of quantum resources and how to quantify them.

Due to the difficulty of handling LOCC, other variants that modify the set of free operations have been considered in the literature.
We can weaken the set of free operations to only allow one-way classical communication (1-LOCC)~\cite{Bennett_1996b}, or even completely replace it with a source of shared randomness (LOSR)~\cite{arxiv_Dukaric_2008,Buscemi_2012a,Forster_2009,Schmid2020typeindependent,Schmid_2023}.
In the other direction, we can consider the set of separable operations or non-entangling operations~\cite{Harrow_2003}.
Although the set of non-entangling operations does not have a clear physical interpretation, it acts as a universal upper bound to any physically-motivated free operations.
This has been used to show that entanglement theory is fundamentally irreversible~\cite{Lami2023}.
In the asymptotic setting, this role is played by a different set of operations, i.e. the asymptotically non-entangling operations~\cite{Brandao_2010a}.

We can extend this standard entanglement theory to take into account some resource-theoretic constraints on the local systems.
One way is to require that each local operation that is performed by any party must be a free operation according to the local resource theory.
Note that the local theories do not have to be identical, i.e. the systems can be heterogeneous.
With this, we may construct the framework of locally-free operations and classical communication (LFOCC).
Various instances of this framework have been studied in the literature, including restrictions on purity~\cite{Horodecki_2003b,Synak-Radtke_2005,arxiv_Horodecki_2005a}, Gaussianity~\cite{Fiurasek_2002,Eisert_2002,Giedke_2002}, coherence~\cite{Chitambar_2016b,Streltsov_2017}, thermodynamics~\cite{arxiv_Bistron_2024,Ganardi_2025a}, and magic~\cite{Andi_Gu_2025}.

\begin{definition}[Locally-free operations and classical communication (LFOCC)]
    A channel $\Lambda \in \mathcal{O}(\bigotimes_i \mathcal{H}^{(i)})$ is implementable by an $N$-round LFOCC protocol if there exist a sequence of parties $\Bqty{i_n}_n$ and a sequence of Kraus operators $\Bqty{K_{l_n | l_{n-1} \ldots l_0}}_n$ such that:
    \begin{enumerate}[itemsep=0pt]
    \item for all $n < N$, $K_{l_n | l_{n-1} \ldots l_0}$ acts nontrivially only on party $i_n$,
    \item $\Lambda (X) = \sum_{l_N, \ldots, l_0} \pqty{K_{l_N | l_{N-1} \ldots l_0} K_{l_{N-1} | l_{N-2} \ldots l_0} \ldots K_{l_0}} X \pqty{K_{l_N | l_{N-1} \ldots l_0} K_{l_{N-1} | l_{N-2} \ldots l_0} \ldots K_{l_0}}^\dagger$.
    \item for all $n < N$ and for each $l_{n-1}, \ldots, l_0$, the channel defined by the Kraus operators $\Bqty{K_{l_n | l_{n-1} \ldots l_0}}$ is a free operation on party $i_n$,
    \end{enumerate}
    A channel is implementable by an LFOCC protocol if it is implementable by $N$-round LFOCC protocol for some $N$.
    For brevity, we will call such channels LFOCC.
\end{definition}

As an example, suppose we have two parties $A, B$.
Then we say that $\Lambda$ is implementable by $1$-round LFOCC if there exists a set of Kraus operators $\Bqty{A_i}$ that defines a free operation on $A$ and another set of Kraus operators $\Bqty{B_{j|i}}$  that defines a free operation on $B$ \emph{for each $i$} such that $\Lambda(\rho^{(12)}) = \sum_{ij} \pqty{A_i \otimes B_{j|i}} \rho^{(12)} \pqty{A_i \otimes B_{j|i}}^\dagger$.
Intuitively, this means that $A$ performs a local operation that outputs an outcome $i$, which is sent to $B$.
Depending on this message from $A$, $B$ applies the local operation $Y \mapsto \sum_j B_{j|i} Y B_{j|i}^\dagger$.
By allowing more rounds of communications, we obtain the set of $N$-round LFOCC.
Note that by dropping condition 3, we recover the LOCC framework.

\subsection{Composite systems of homogeneous resources}\label{section:homogeneous-composition}

Without calling it explicitly, the standard resource theory formalism has dealt with composite systems with homogeneous resources.
By homogeneous, we mean that each system is governed by the same type of resource, e.g. coherence, non-Gaussianity, magic, etc.
This is done every time we consider multi-copy transformations, catalysis, or whenever we consider the resources in several systems.
To see the connection to our framework, for example, we can simply associate each subsystem with a different node on the network.
However, this is usually done implicitly as there is a natural choice of a composite theory.
For instance, suppose we have two systems $\mathcal{H}^{(1)}, \mathcal{H}^{(2)}$ that are governed by Gibbs-preserving theory, with free states $\gamma^{(1)}, \gamma^{(2)}$, respectively.
Then, the composite system $\mathcal{H}^{(12)}$ would be governed by another Gibbs-preserving theory, with a \emph{composite} free state $\gamma^{(12)} = \gamma^{(1)} \otimes \gamma^{(2)}$.
Similarly, if $A^{(1)} B^{(1)}$ and $A^{(2)} B^{(2)}$ are governed by entanglement theory (in the $A|B$ cut), then the composite system $A^{(1)} A^{(2)} B^{(1)} B^{(2)}$ has a natural definition of composite free states (i.e. separable states in the $A^{(1)} A^{(2)} | B^{(1)} B^{(2)}$ cut) and free operations (i.e. LOCC in the $A^{(1)} A^{(2)} | B^{(1)} B^{(2)}$ cut).
Note that the structure of the composite theory might differ between different resource theories;
for Gibbs-preserving theory, the composite free state is just a tensor product of locally free states, while in entanglement theory, the composite separable states includes correlated (in the $A^{(1)} B^{(1)} | A^{(2)} B^{(2)}$ cut) states such as $\ket{Bell_{A^{(1)} A^{(2)}}} \otimes \ket{Bell_{B^{(1)} B^{(2)}}}$.

It has only been recently noticed that these natural compositions give rise to interesting consequences to particular resource theories.
One of the most prominent consequences is the possibility of resource broadcasting~\cite{Hickey_2018}---creating a resourceful state using an auxiliary state, without any net change to the auxiliary---which is otherwise denied for most minimally composed resource theories~\cite{Son_RobustCat}.
This suggests that the composite structure plays a crucial role in determining the extensibility of a resource.

When we consider composite systems in any resource theory, we often implicitly require that the set of free states $\Bqty{S_n}$ satisfies the Brandão-Plenio axioms~\cite{Brandao_2010,arxiv_Hayashi_2024a,Lami_2025}:

\begin{definition}[Brandão-Plenio axioms]\label{def:BPaxioms}	
	A resource theory $(S,F)$ satisfies the Brandão-Plenio axioms if:
	\begin{enumerate}[itemsep=0pt]
		\item Each $S_n$ is convex and closed.
		\item Each $S_n$ contains a full-rank state.
		\item If $\rho \in S_{n+1}$, then $\Tr_i \rho \in S_n$ for every $i = 1, \ldots n+1$.
		\item If $\rho \in S_n$ and $\sigma \in S_m$, then $\rho \otimes \sigma \in S_{n+m}$.
		\item If $\rho \in S_n$, then $P_{\pi} \rho P_{\pi} \in S_n$ for every permutation $\pi$.
	\end{enumerate}
\end{definition}

These axioms ensure that the set of free states $\Bqty{S_n}$ is stable under tensor product operations;
resources that do not satisfy these axioms often behave in unintuitive ways.
For example, the resource theory of Bell non-locality~\cite{Palazuelos_2012} and genuine multipartite entanglement~\cite{Yamasaki_2022,Palazuelos_2022} are known to display activation:
there exists a state $\rho$ that is free, but $\rho^{\otimes 2}$ is not free.
To some degree, this shows an incompatibility between the theory for composite systems and single systems.
Nevertheless, most resource theories in the literature (including entanglement, coherence, magic, thermodynamics, asymmetry, etc.) \emph{do} satisfy these conditions.
As a side remark, we note that the Brandão-Plenio axiom (4) ensures that the limit in the definition of the regularized relative entropy of a resource exists.
To summarize, the Brandão-Plenio axioms impose a condition on the composite theory to be compatible with the single system theory.

In addition to ensuring stability, another reason for requiring the Brandão-Plenio axioms is that under these assumptions, we can show that there is a \emph{tight} bound on the transformation rate~\cite{Brandao_2008,Brandao_2010a,Lami_2025,arxiv_Hayashi_2024a} (see~\cite{Donald_2002,Horodecki_2002} also for an earlier argument establishing a not necessarily tight bound in entanglement theory):
\begin{align}
	R(\rho \xrightarrow{(S, F)} \sigma)
	&\leq
	\frac
	{D^{\infty} (\rho \| S)}
	{D^{\infty} (\sigma \| S)},
\end{align}
where tightness means for any set of free states $S$, there is a construction of a set of free operations $F$ that achieves the upper bound with equality.
These are known as asymptotically resource non-generating operations (ARNG)~\cite{Brandao_2010a}, i.e. the \emph{maximal} set of transformations that preserves the set of free states asymptotically.

In some resource theories, multiple compatible composite theories might exist.
An illustrative example can be found in the theory of thermodynamics.
Suppose our single system consists of a qubit with some fixed temperature and Hamiltonian.
Then, the set of free states consists of a single state, namely the Gibbs state.
The standard definition of free operations is the set of thermal operations~\cite{Janzing_2000}, which is defined in a bottom-up fashion.
We can also define it axiomatically~\cite{Cwiklinsky2015EnTO}, which is a completely equivalent formulation for a single qubit. 
However, the situation changes for composite (i.e. higher dimensional) systems, where it was shown that these two formulations give rise to different non-equivalent resource theories~\cite{Ding2021GPCvsTO}.
Incidentally, a different approach to composite theory has been considered recently where only the resources contained in a \emph{local} subsystem can be utilized~\cite{arxiv_Zhang_2025c,Horova_2022}, as opposed to previous works that extract resources encoded in the correlations between different subsystems~\cite{Francica_2017,arxiv_Biswas_2025,Perarnau-Llobet_2015,Dahlsten_2011,Manzano_2018,Funo_2013,Maruyama_2005,Levitin_2011}.

It is worth noting that these naturally defined composite theories are often \emph{not} compatible with LOCC.
For example, in the resource theory of magic, Clifford unitaries include manifestly entangling operations such as CNOT.
However, LOCC includes \emph{all} local unitaries, including non-Clifford ones.
This is simply because they capture different resources; LOCC captures entanglement while Cliffords capture magic.
This can be partly rectified by restricting ourselves to LFOCC, but this precludes studying networks where some restricted form of quantum communication is allowed.
For example, in a network of quantum computers, it is reasonable to allow different nodes to share some entanglement.
This is not unique to the resource theory of magic;
some forms of entanglement such as two-mode squeezed states can be created with Gaussian operations, which corresponds to free operations in the resource theory of non-Gaussianity.

\subsection{Additional remarks for heterogeneous resources}\label{supp:heterogeneous}

\begin{remark}
We can state condition (d) in \cref{definition: composite RTs} in an alternate form: 
\begin{quote}
    (d') if $\Lambda^{(i)} \in F$ acts only on $\mathcal{S}(\mathcal{H}^{(i)})$, then $\Lambda^{(i)} \in F^{(i)}$.
\end{quote}
Clearly, condition (d') is implied by condition (d).
Now, suppose that (1) $F$ is closed under concatenation, (2) preparation of any free state is a free operation in all the local theories, and (3) partial trace is a free operation in all the local theories.
We can show that under these assumptions,
(d') now implies (d). Note that the marginal channel $\Lambda^{(i)} (X) = \Tr_{\overline{i}} \Lambda(X^{(i)} \otimes \bigotimes_{j \neq i} \rho^{(j)})$ can be written as a concatenation of three channels: preparation of a free state $\rho^{(j)}$, $\Lambda$, and partial trace.
By assumption, this channel is in $F$.
Therefore by condition (d'), $\Lambda^{(i)} \in F^{(i)}$, and condition (d) is satisfied.
This clarifies the physical content of condition (d): locally, any composite free operations reduces to some local free operation.
\end{remark}

\begin{remark}
While it is clear how to take marginals of a quantum state, there are several inequivalent ways to do so for quantum channels.
For example, we can define it through the procedure implied in condition (d):
to obtain the effective local channel at party $1$, we input a free state on all other parties and trace out the output.
This defines a quantum channel at party $1$, and condition (d) requires that this channel be free.
On the other hand, we could also consider the Choi state of the channel and take the appropriate marginals in that space.
This corresponds to feeding the maximally mixed state $\mathbf{1}_i/d_i$ on all other parties and tracing out the output.
Operationally, the marginal channel obtained through this procedure is the \emph{average} channel that results from inputting a random pure state on all other sites and tracing the output, since $\int d\mu(\psi) \ketbra{\psi}^{(i)} = \mathbf{1}_i/d_i$.
In this manuscript, we choose to adopt the former, as it is a weaker assumption satisfied by virtually all operationally-defined resource theories.
\end{remark}

\section{A uniquely maximal composition does not exist}\label{supp:no-fmax}
\begin{example}[No maximal set of composite free operations]\label{example:no-fmax}
	Suppose we have two parties, and the local resource theories are instances of the theory of unital maps.
	Then $S^{(1)} = \Bqty{\mathbf{1}/d_1}, S^{(2)} = \Bqty{\mathbf{1}/d_2}$, and $F^{(1)}, F^{(2)}$ are the corresponding sets of unital maps.
	For simplicity, let us assume $d_1 = d_2 = d$.
	The extremal composite free states are $\Smin = \Bqty{\mathbf{1}/{d^2}}$ and $\Smax = \Bqty{\rho^{(12)} \,|\, \rho^{(1)} = \rho^{(2)} = \mathbf{1}/d }$.
	Consider the following two resource theories: $(\Smin, \RNG(\Smin))$ and $(\Smax, \RNG(\Smax))$.
	We can easily verify that they are both valid composite theories.
	Thus, if there exists a maximal set of composite operations $\Fmax$, we must have 
    \begin{equation}
	    \RNG(\Smin)  \subseteq \Fmax\ \textrm{and}\ \RNG(\Smax) \subseteq \Fmax.
    \end{equation}
    However, given the maximally entangled state $\ket{\Phi} = \sum_{i < d} \frac{1}{\sqrt{d}} \ket{ii}^{(12)}$, let us consider the map
    \begin{align}
    \Lambda(\rho^{(12)}) = \pqty{\Tr_{12} \rho^{(12)}} \ketbra{\Phi},
    \end{align}
    
    Observe that the state $\Phi$ is in $\Smax$, so the channel $\Lambda$ is in $\RNG(\Smax)$.
	Next, let $U$ be the unitary channel that rotates $\ketbra{\Phi}$ to $\ketbra{00}$, which is in $\RNG(\Smin)$ since it is a unitary and thus is also a unital operation. By the condition in Definition \ref{def:RT} that a resource theory should be closed under concatenation, the map $\Lambda'(\rho^{(12)}) = U \Lambda(\rho^{(12)}) U^{\dagger} = \pqty{\Tr_{12} \rho^{(12)}} \ketbra{00}$ should be in $\Fmax$.
	However, $\Lambda'$ cannot be a free composite operation according to \cref{definition: composite RTs}, since it violates condition (d), i.e. $\Tr_2 \Lambda'(X \otimes \mathbf{1}^{(2)}/d) = \pqty{\Tr X} \ketbra{0}$ is not a unital map.
\end{example}

The example above shows the incompatibility between preparing entangled but locally free states and free operations that preserve only the minimal set of free states. 
One might ask whether, if we restrict to operationally motivated composite free operations, we \emph{can} incorporate locally free entanglement. 
However, a slight modification of the above example shows that adding locally free entanglement to LFOCC also renders the theory trivial.
Under the same setting as \cref{example:no-fmax}, quantum teleportation is implementable with LFOCC because the teleportation protocol can be implemented by projective measurements and conditional unitaries, both of which are unital maps.
Thus, each step below is a free operation:
1) trace out the initial state,
2) prepare two copies of the Bell state $\Phi^{(12)} \otimes \Phi^{(1'2')}$, and
3) teleport subsystem $2$ of $\Phi^{(12)}$ to party $1$ using LFOCC and $\Phi^{(1'2')}$.
This protocol implements a state preparation channel that prepares a pure state on party $1$, and therefore it \emph{cannot} be free.
The example demonstrates that, in the composite setting, multiple incompatible composite resource theories can arise. 
If we instead allow only locally free operations (without classical communication) in addition to locally free entanglement, a proper composite resource theory can be defined.
This is reminiscent of the theory of Bell non-locality, where distinct parties may share a common source of correlations that are then processed only locally~\cite{Buscemi_2012a}.

\section{Constructing monotones from a resource conversion task}\label{supp:monotone}

We begin by proving \cref{proposition:monotone}.
\begin{proof}[Proof of \cref{proposition:monotone}]
  Let $\Lambda^{(1)} \in F^{(1)}$.
  Then,
  \begin{align}
    R_1(\Lambda^{(1)}(\rho^{(1)}))
    &=
      \sup R_2\pqty{\Tr_1 \bqty{\Lambda(\Lambda^{(1)}(\rho^{(1)}) \otimes \mu^{(2)})}}\nonumber
    \leq
      \sup R_2\pqty{\Tr_1 \bqty{\Lambda'(\rho^{(1)} \otimes \mu^{(2)})}}\nonumber
    = R_1(\rho^{(1)}),
  \end{align}
  where we use the fact that for any composite theory $(S, F)$ and $\Lambda \in F$, we have $\Lambda' = \Lambda \circ \pqty{\Lambda^{(1)} \otimes \mathbf{1}_2}\in F$ because $F^{(1)} \otimes \mathbf{1}_2 \subseteq F$ and $F$ must be closed under concatenation.
\end{proof}

Let us examine an exemplary application of \cref{proposition:monotone}.
Consider an $R_2$ monotone defined as the generalized robustness, $R_2(\rho) = D_{\rm max} (\rho \| S^{(2)})$, where $D_{\rm max} (\rho \| \sigma) = \inf\Bqty{ \log \lambda \,|\, \rho \leq \lambda \sigma}$.
Notice that for any $\mu^{(2)} \in S^{(2)}$, any composite theory $(S, F)$ and any $\Lambda \in F$, we have
\begin{align}
  D_{\rm max}(\rho \| S^{(1)})\geq
    D_{\rm max}(\rho \otimes \mu^{(2)} \| \Smin)\nonumber
  \geq
    D_{\rm max}(\Lambda(\rho \otimes \mu^{(2)}) \| \Smax)\nonumber
  \geq
    D_{\rm max}(\Tr_1 \Lambda(\rho \otimes \mu^{(2)}) \| S^{(2)}) = R_1(\rho),
\end{align}
where we use \cref{theorem:single-shot} for the second inequality, and for the last equality we recall $R_1$ is defined by Eq.~\eqref{eq:2monotone}.
This shows that $R_1(\rho) \leq D_{\rm max} (\rho \| S^{(1)})$.
Thus we obtain a new $(S^{(1)}, F^{(1)})$-monotone that is bounded by the generalized robustness $D_{\rm max} (\rho \| S^{(1)})$.
By following an analogous argument, we can show that if we choose relative entropy $R_2'(\rho) := D(\rho \| S^{(2)})$ as the $(S^{(2)}, F^{(2)})$-monotone, we obtain a new monotone that is upper bounded by the relative entropy $R_1'(\rho) \leq D(\rho \| S^{(1)})$.

Some properties of the original monotone is automatically inherited by the induced monotone, e.g. if $R_2$ is continuous, then so is $R_1$.
On the contrary, some properties require stronger assumptions: if $R_2$ is additive on tensor product states, then $R_1$ is only superadditive on tensor product states.
The following proposition shows that faithfulness is inherited under some mild additional constraints on $S^{(2)}$.

\begin{lemma}\label{proposition:monotone-faithful}
	Let $S^{(1)}, S^{(2)}$ be convex and compact sets.
	Furthermore, assume that $S^{(2)}$ is full-dimensional and $S^{(2)} \neq \mathcal{S}(\mathcal{H}^{(2)})$.
	Then, if $R_2$ is faithful, so is $R_1(\rho) = \sup R_2(\Tr_1 \bqty{\Lambda(\rho^{(1)} \otimes \mu^{(2)})})$, where the supremum is taken over all composite theories $(S, F)$, $\Lambda \in F$, and $\mu^{(2)} \in S^{(2)}$.
\end{lemma}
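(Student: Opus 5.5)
The plan is to prove the two directions of faithfulness separately: $R_1(\rho)=0$ if and only if $\rho\in S^{(1)}$.

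\textbf{Easy direction.} Suppose $\rho^{(1)}\in S^{(1)}$ and take any composite theory $(S,F)$, any $\Lambda\in F$ and any $\mu^{(2)}\in S^{(2)}$. Condition \ref{condition:free product states} gives $\rho^{(1)}\otimes\mu^{(2)}\in S$. Then $F(S)\subseteq S$ gives $\Lambda(\rho^{(1)}\otimes\mu^{(2)})\in S$, and condition \ref{condition:free marginal states} gives $\Tr_1\Lambda(\rho^{(1)}\otimes\mu^{(2)})\in S^{(2)}$. So every term in the supremum has $R_2=0$, using the convention $R_2|_{S^{(2)}}=0$.

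\textbf{Hard direction.} Suppose $\rho^{(1)}\notin S^{(1)}$. I will exhibit one composite theory and one operation whose output on party $2$ is non-free; faithfulness of $R_2$ then gives $R_1(\rho^{(1)})>0$.

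First I separate $\rho^{(1)}$ from the convex compact set $S^{(1)}$ by a hyperplane. This gives an effect $0\le P\le\mathbf 1$ (an affinely rescaled witness) with $p_1:=\Tr(P\rho^{(1)})>p_0:=\max_{\sigma\in S^{(1)}}\Tr(P\sigma)$.

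Next I fix the target states on party $2$. Since $S^{(2)}$ is full-dimensional, it contains a point $\sigma_0$ in the interior relative to the state space. Since $S^{(2)}\neq\mathcal S(\mathcal H^{(2)})$, there is a state $\omega\notin S^{(2)}$. By convexity and compactness, the segment $\sigma_0+t(\omega-\sigma_0)$ lies in $S^{(2)}$ exactly for $t\le t^*$, for some $t^*\in(0,1)$.

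Now I define the measure-and-prepare channel
\begin{equation*}
\Lambda(X)=\mu^{(1)}\otimes\Bqty{\bqty{\Tr X-\kappa\Tr(PX^{(1)})}\sigma_0+\kappa\Tr(PX^{(1)})\,\omega},
\end{equation*}
where $X^{(1)}=\Tr_2X$, $\mu^{(1)}\in S^{(1)}$ is fixed, and $\kappa=\min(t^*/p_0,1/p_1)$ (with $t^*/p_0$ read as $+\infty$ if $p_0=0$). The bound $\kappa\le 1/p_1$ keeps every output a convex mixture, so $\Lambda$ is CPTP. The bound $\kappa p_0\le t^*$ makes the output on party $2$ free whenever the party-$1$ marginal of the input is free. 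In the other case, $\kappa p_1>t^*$: either $\kappa p_1=1$, giving output $\omega$, or $\kappa=t^*/p_0$ and $p_1>p_0$. Hence $\Tr_1\Lambda(\rho^{(1)}\otimes\mu^{(2)})\notin S^{(2)}$.

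\textbf{Main obstacle.} The remaining and delicate step is to place $\Lambda$ inside a legitimate composite theory. I take $F$ to be the convex hull of all finite concatenations of elements of $\Fmin$ with $\Lambda$, and $S=\Smax$. The key structural observation is an invariant class of maps. Let $\mathcal C$ be the maps of the form $X\mapsto \alpha\otimes\mathcal E(\Tr_2X)$, where $\alpha\in S^{(1)}$ and $\mathcal E$ is a channel sending $S^{(1)}$ into $S^{(2)}$. Then $\Lambda\in\mathcal C$, and composing an element of $\mathcal C$ on either side with a product $\Lambda^{(1)}\otimes\Lambda^{(2)}$ of local free operations stays in $\mathcal C$, because local free operations preserve free states.

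For maps in $\mathcal C$ I check the axioms directly:
\begin{itemize}[leftmargin=*,itemsep=0pt]
\item Condition \ref{condition:free marginal operations}: the party-$1$ marginal is the replacement channel $X\mapsto(\Tr X)\alpha$. The party-$2$ marginal is preparation of the fixed free state $\mathcal E(\rho^{(1)}_{\rm free})$. I use the standing convention that preparing free states is a free local operation, so both are in $F^{(i)}$.
\item Preservation of $\Smax$: outputs on $\Smax$ inputs are products of free states.
\end{itemize}
Condition \ref{condition:free product operations} holds by construction. The mixed compositions, and convex combinations of $\Fmin$ elements with $\mathcal C$ elements, still have free marginals because condition \ref{condition:free marginal operations} is linear in the channel and $F^{(i)}$ is convex. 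This bookkeeping is where I expect most of the care to be needed.
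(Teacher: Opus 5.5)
\textbf{Gap: your channel need not be completely positive.} The justification ``$\kappa\le 1/p_1$ keeps every output a convex mixture'' only covers inputs with $\Tr(PX^{(1)})\le p_1$, such as $\rho^{(1)}$. On a general input the party-$2$ output is $\sigma_0+\kappa\Tr(PX^{(1)})(\omega-\sigma_0)$, and $\Tr(PX^{(1)})$ can be as large as $\lambda_{\max}(P)\ge p_1$.

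The Choi matrix is $\mathbf{1}\otimes\sigma_0+\kappa P^{T}\otimes(\omega-\sigma_0)$. So $\Lambda$ is CP iff $\sigma_0+\kappa\lambda_{\max}(P)(\omega-\sigma_0)\ge 0$. Since $\sigma_0$ is full rank, for pure $\omega$ this is exactly $\kappa\lambda_{\max}(P)\le 1$.

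Here is a concrete failure. Take $S^{(1)}=\{\ketbra{0}\}$, $\rho^{(1)}=\mathbf{1}/2$ and $P=\ketbra{1}$, so $p_0=0$, $p_1=1/2$ and $\kappa=2$. On party $2$ take the separable two-qubit states with $\sigma_0=\mathbf{1}/4$ and $\omega=\Phi^+$. Any input with party-$1$ marginal $\ketbra{1}$ is sent to $\mu^{(1)}\otimes(2\Phi^+-\mathbf{1}/4)$, which has eigenvalue $-1/4$.

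Capping $\kappa$ at $1/\lambda_{\max}(P)$ does not rescue the construction. Keep the same $S^{(1)}$ and $P$, take $\rho^{(1)}=(1-\delta)\ketbra{0}+\delta\ketbra{1}$ and a pure $\omega$. You would need both $\kappa\le 1$ and $\kappa\delta>t^*$, which is impossible once $\delta\le t^*$. A purely multiplicative rescaling of the effect cannot in general give both positivity and a non-free output.

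\textbf{The fix.} The missing step is an affine \emph{shift} that places the witness threshold exactly at $t^*$. The paper does this by normalizing both the witness threshold and the segment threshold to $1/2$. For example, choose $0<\epsilon\le\min(t^*,1-t^*)$ and set $Q=t^*\mathbf{1}+\epsilon(P-p_0\mathbf{1})$. Then:
\begin{itemize}
\item $0\le Q\le\mathbf{1}$;
\item $\Tr(Q\sigma)\le t^*$ for every $\sigma\in S^{(1)}$;
\item $\Tr(Q\rho^{(1)})=t^*+\epsilon(p_1-p_0)>t^*$.
\end{itemize}
Hence $\mathcal{E}(Y)=\Tr[(\mathbf{1}-Q)Y]\,\sigma_0+\Tr[QY]\,\omega$ is a genuine measure-and-prepare channel with $\mathcal{E}(S^{(1)})\subseteq S^{(2)}$ and $\mathcal{E}(\rho^{(1)})\notin S^{(2)}$. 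If you use $X\mapsto\mu^{(1)}\otimes\mathcal{E}(\Tr_2 X)$ in place of your $\Lambda$, the rest of your argument goes through.

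\textbf{The rest of your argument.} It is sound, and it does more than the paper. The paper stops at ``it suffices to find a CPTP map $\mathcal{H}^{(1)}\to\mathcal{H}^{(2)}$ preserving free states with non-free image of $\rho$''. It leaves implicit both the easy direction and the fact that such a map lies in some composite theory; your invariant class $\mathcal{C}$ with $S=\Smax$ supplies the latter.

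Note, however, that ``preparing free states is locally free'' is not part of \cref{def:RT}; the paper invokes it only conditionally. You should state it as an explicit hypothesis. Some such hypothesis is genuinely necessary. If $F^{(1)}=F^{(2)}=\{\mathrm{id}\}$ and $S^{(1)}$ is a full-dimensional proper subset, condition \ref{condition:free marginal operations} plus linearity forces $\Tr_1\Lambda(\rho^{(1)}\otimes\mu^{(2)})=\mu^{(2)}$ for every $\rho^{(1)}$. Then $R_1\equiv 0$ and faithfulness fails.
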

\begin{proof}
	Since we assume that $R_2$ is faithful, to prove the claim it is enough to show that for any $\rho \not\in S^{(1)}$, there is a CPTP map $\Lambda: \mathcal{S}(\mathcal{H}^{(1)}) \to \mathcal{S}(\mathcal{H}^{(2)})$ such that $\Lambda(S^{(1)}) \subseteq S^{(2)}$ and $\Lambda(\rho) \not\in S^{(2)}$.

	To this end, let us fix an arbitrary $\rho \not\in S^{(1)}$.
	Since $S^{(1)}$ is convex and compact, by the Hahn-Banach theorem there exists a Hermitian witness $W$ such that $\Tr W \rho < \inf_{\mu \in S^{(1)}} \Tr W \mu$.
     	Without loss of generality, we can assume $0 \leq W \leq \mathbf{1}$, since otherwise we can take
        \begin{align}
            W' = \frac{W + \abs{s} \mathbf{1}}{\norm{W}_{\infty} + \abs{s}},
        \end{align}
        where $s$ is the smallest eigenvalue of $W$.
	Similarly, without loss of generality we can assume that $\Tr W \rho < 1/2$, and $\inf_{\mu \in S^{(1)}} \Tr W \mu \geq 1/2$.
	Otherwise, let us denote $q^{*} = \inf_{\mu \in S^{(1)}} \Tr W \mu$, and note that $\Tr W \rho < q^{*}$.
	Let $W' = \frac{1}{2} \mathbf{1} + \epsilon \pqty{W - q^{*} \mathbf{1}}$, for some sufficiently small $\epsilon > 0$ such that $0 \leq W' \leq \mathbf{1}$.
	Then, we have $\Tr W' \rho = 1/2 + \epsilon \pqty{\Tr W\rho - q^{*}} < 1/2$ and $\inf_{\mu \in S^{(1)}} \Tr W' \mu = 1/2$.
	Now, since $S^{(2)}$ is full-dimensional, we can take a state $\tau$ in the interior of $S^{(2)}$.
	Take another state $\sigma \not\in S^{(2)}$, and let us define $\sigma_p = (1-p) \sigma + p \tau$, where $0 \leq p \leq 1$.
	Since $S^{(2)}$ is convex and closed, there exists $p^{*}$ such that $\sigma_p \in S^{(2)}$ if and only if $p \geq p^*$.
	Furthermore, since $\tau$ is in the interior of $S^{(2)}$, we have $p^* < 1$.
	Now, without any loss of generality, we can assume that $p^* = 1/2$, since otherwise we can take $\sigma' = (1-p^* + \delta) \sigma + (p^* - \delta) \tau, \tau' = (1-p^* - \delta) \sigma + (p^* + \delta) \tau$, and $\sigma'_p = (1-p) \sigma' + p \tau'$ instead, for some sufficiently small $\delta > 0$.

	Now, let $\Lambda (X) = \pqty{\Tr\bqty{(\mathbf{1} - W)X}} \sigma + \pqty{\Tr\bqty{WX}} \tau$.
	Since $0 \leq W \leq \mathbf{1}$ and $\sigma, \tau$ are valid states, $\Lambda$ is a CPTP map.
	Furthermore, for any $\mu \in S^{(1)}$ we have $\Lambda(\mu) = \Tr\bqty{(\mathbf{1} - W)\mu} \sigma + \Tr\bqty{W\mu} \tau = \sigma_{p = \Tr W\mu} \in S^{(2)}$, since $\Tr W\mu \geq 1/2$, showing $\Lambda(S^{(1)}) \subseteq S^{(2)}$.
	Finally, we have $\Lambda(\rho) \not\in S^{(2)}$, since $\Tr W\rho < 1/2$, showing the claim.
\end{proof}

Note that if we take the resource theory at party $2$ to be entanglement theory, the assumptions in \cref{proposition:monotone-faithful} are fulfilled.
This significantly strengthens the result in Ref.~\cite{Streltsov_2015};
not only can we measure coherence with entanglement, in fact we can measure \emph{any} convex resource with entanglement.
Furthermore, the resource conversion task is always achievable in \emph{any} RNG composite theory;
we can show that for any choice of composite free states $S$ and any $\rho \not\in S^{(1)}$, there is a CPTP map $\Lambda$ that preserves $S$ and $\Tr_1 \bqty{\Lambda\pqty{\rho^{(1)} \otimes \mu^{(2)}}} \not\in S^{(2)}$.

Note that in \cref{proposition:monotone}, it is crucial that we are taking supremum over all composite theories; otherwise, the monotone that we obtain might be trivial.
For example, if we only consider $\Fmin$ or LFOCC as free operations, then $R_1(\rho^{(1)})=0$ for any $\rho^{(1)}$, as these operations cannot convert one resource into another.
\begin{lemma}\label{proposition:lfocc}
  Let $S^{(2)}$ be a convex set and $\mu^{(2)} \in S^{(2)}$ be a free state.
  Then for any $\rho^{(1)}$ and any LFOCC protocol $\Lambda$, we have $\Tr_1 \Lambda(\rho^{(1)} \otimes \mu^{(2)}) \in S^{(2)}$.
\end{lemma}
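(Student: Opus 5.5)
We argue by induction on the number of rounds $N$ of the LFOCC protocol, tracking the ensemble of party-$2$ states conditioned on the classical transcript. Fix Kraus operators $\{K_{l_n|l_{n-1}\ldots l_0}\}$ realising $\Lambda$. After $n$ rounds, write the unnormalised conditional global state as
\begin{equation*}
\omega_{l_n\ldots l_0} = K_{l_n|\ldots}\cdots K_{l_0}\,(\rho^{(1)}\otimes\mu^{(2)})\,(K_{l_n|\ldots}\cdots K_{l_0})^\dagger .
\end{equation*}
The invariant we maintain is that $\omega_{l_n\ldots l_0}$ has the form $\omega'_{l_n\ldots l_0}\otimes \nu_{l_n\ldots l_0}$ with $\nu_{l_n\ldots l_0}$ proportional to a state in $S^{(2)}$. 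More generally, we take the invariant to be that the party-$2$ marginal of each branch is proportional to a free state. The product form suffices because all Kraus operators act locally on one party and the input is a product state. Hence every branch stays a product state across the $1|2$ cut, and we only need to control the factor on party $2$.

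\emph{Inductive step.} If round $n$ acts on party $1$, the party-$2$ factor of each branch is unchanged up to a scalar, so the invariant holds trivially. If round $n$ acts on party $2$ with Kraus set $\{K_{l_n|\cdot}\}$, each branch's party-$2$ factor $\nu$ gets replaced by the unnormalised pieces $K\nu K^\dagger$. This is the main obstacle. The definition only guarantees that the \emph{channel} $\sum_{l_n} K_{l_n|\cdot}(\cdot)K_{l_n|\cdot}^\dagger$ is in $F^{(2)}$, and hence (since $F^{(2)}(S^{(2)})\subseteq S^{(2)}$) that the sum $\sum_{l_n}K\nu K^\dagger$ is free. The individual post-measurement pieces need not be free. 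For instance, a coherence-destroying Kraus decomposition could produce coherent branches that only average to something incoherent.

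To sidestep this, we avoid requiring branchwise freeness. Instead we use the final partial trace $\Tr_1$ together with the order of operations. Party $2$'s final marginal is
\begin{equation*}
\Tr_1\Lambda(\rho^{(1)}\otimes\mu^{(2)})=\sum_{\text{transcripts}} p(\text{party-1 outcomes})\,\cdots
\end{equation*}
We reorganise it as follows. Condition on the full transcript of party-$1$ outcomes only, and sum over party-$2$ outcomes before party $2$'s later actions. Party-$1$ operations can depend on party-$2$ outcomes, though, so a direct summation fails. The plan is therefore an induction from the \emph{last} round backwards. Define $T_n$ as the map sending a branch's party-$2$ state to party $2$'s eventual averaged marginal under the remaining protocol, with party $1$'s remaining actions traced out. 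We show $T_n$ is a convex combination of compositions of channels in $F^{(2)}$ and party-$1$-dependent weights. The difficulty is that these weights depend on party $2$'s outcomes.

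If that backward induction stalls, the fallback is to strengthen the hypothesis in the natural way for the LFOCC literature. We would require, as in the intended reading of condition 3 for SIO/IO-type theories, that each Kraus branch of a locally free instrument maps $S^{(2)}$ into the cone of $S^{(2)}$. Then the forward induction goes through: each branch's party-$2$ factor stays in $\mathrm{cone}(S^{(2)})$, and summing over all branches with $\Tr_1$ gives a positive combination of free states with total trace $1$. This lies in $S^{(2)}$ by convexity, which completes the proof.
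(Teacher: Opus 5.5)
Your fallback argument is correct, but it proves a different lemma: it adds the hypothesis that every party-$2$ Kraus operator maps $S^{(2)}$ into $\mathrm{cone}(S^{(2)})$. The backward induction meant to handle the statement as written cannot be completed, because the obstacle you isolated is fatal, not technical. In the paper's LFOCC definition, only the channel $\sum_{l_n}K_{l_n|\cdots}(\cdot)K_{l_n|\cdots}^\dagger$ of each round must be free. Later Kraus operators may depend on all earlier outcomes, including party $2$'s own. Under that definition the lemma is false.

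Here is a counterexample. Let party $2$ be a qubit with the unital theory, $S^{(2)}=\{\mathbf{1}/2\}$, and run three rounds:
\begin{itemize}
\item party $2$ applies $K_{l_0}=|l_0\rangle\langle l_0|$ (complete dephasing, which is unital);
\item party $1$ applies $\mathbf{1}$;
\item party $2$ applies the unitary $\sigma_X^{l_0}$.
\end{itemize}
Then
\begin{equation*}
\Lambda(\rho^{(1)}\otimes\mu^{(2)})=\rho^{(1)}\otimes\sum_{l_0}\langle l_0|\mu^{(2)}|l_0\rangle\,\sigma_X^{l_0}|l_0\rangle\langle l_0|\sigma_X^{l_0}=\rho^{(1)}\otimes|0\rangle\langle 0|,
\end{equation*}
so $\Tr_1\Lambda(\rho^{(1)}\otimes\mathbf{1}/2)=|0\rangle\langle 0|\notin S^{(2)}$. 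This is exactly the ``measurement plus conditional unitary'' pattern the paper itself counts as LFOCC in the teleportation discussion of \ref{supp:no-fmax}. The same trick breaks SIO. Write the completely depolarizing channel, which is SIO, with Kraus operators $|\pm\rangle\langle j|/\sqrt{2}$, then apply $\sigma_Z$ on the $-$ branches; this prepares $|+\rangle\langle +|$ from any input.

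The paper's own proof has the same hole. Its two one-way computations are valid even with channel-level freeness, because in a one-way protocol no outcome of party $2$ ever steers a later party-$2$ action. Its closing sentence, however, says a general protocol is a composition of one-way ones, and that does not let you chain them. The computations assume a product input $\rho^{(1)}\otimes\mu^{(2)}$. After a $2\to1$ round the state is $\sum_j\Psi_j(\rho^{(1)})\otimes B_j\mu^{(2)}B_j^\dagger$, whose conditional party-$2$ states need not be free, and the next round can exploit exactly this, as above.

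Your strengthened hypothesis is the right repair, and your forward induction proves that version completely. Each branch is $A\rho^{(1)}A^\dagger\otimes B\mu^{(2)}B^\dagger$ with $B\mu^{(2)}B^\dagger\in\mathrm{cone}(S^{(2)})$. After $\Tr_1$, the sum over branches is a positive combination of free states with total trace one, hence a convex combination. The hypothesis holds whenever the local instruments are built from SIO or real Kraus operators. It is also what the paper tacitly relies on in \cref{proposition:lfocc-certification-bound}, where it needs $A_i^\dagger A_i$ to be diagonal. So state the lemma with this Kraus-level hypothesis, or restrict it to one-way protocols, and drop the backward-induction paragraph.
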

\begin{proof}
  Intuitively, we rely on the fact that if $\Lambda$ is an LFOCC protocol, the local operation that is applied at party $2$ must be a free operation.
  We thus obtain the claim since all free operations must preserve the set of free states.
  To be precise, let $\Lambda(\rho^{(12)}) = \sum_{ij} \pqty{A_i \otimes B_{j|i}} \rho^{(12)} \pqty{A_i \otimes B_{j|i}}^\dagger$ be a $1$-way LFOCC from party 1 to party 2.
  We have
  \begin{align}
    \Tr_1 \Lambda(\rho^{(1)} \otimes \mu^{(2)})
    & =
      \Tr_1
      \sum_{ij}
      \pqty{A_i \otimes B_{j|i}}
      \rho^{(1)} \otimes \mu^{(2)}
      \pqty{A_i \otimes B_{j|i}}^\dagger
    \\
    & =
      \sum_{ij}
      \pqty{\Tr \bqty{A_i \rho^{(1)} A_i^\dagger}}
      B_{j|i}
      \mu^{(2)}
      B_{j|i}^\dagger.
  \end{align}
  Using the fact that $\mu'_i{}^{(2)} = \sum_j B_{j|i} \mu^{(2)} B_{j|i}^\dagger$ must be a free state and $S^{(2)}$ is convex, it follows that $\Tr_1 \Lambda(\rho^{(1)} \otimes \mu^{(2)})$ is in $S^{(2)}$.

  On the other hand, let $\Lambda'$ be a $1$-way LFOCC from party 2 to party 1.
  Then it naturally follows that $\Tr_1 \circ \Lambda'=\Tr_1 \otimes \tilde{\Lambda}^{(2)}$ with some free operation $\tilde{\Lambda}$ on party 2.
  Hence $\Tr_1 \Lambda(\rho^{(1)} \otimes \mu^{(2)})=\Phi(\mu)\in S^{(2)}$.
  Since every LFOCC protocol is composed of $1$-way LFOCC in both directions, the claim follows.
\end{proof}

\section{Further examples for remote certification}\label{supp:remote-certification}

In what follows, we will characterize the performance of LFOCC-based preprocessing and RNG protocols.

\subsubsection{LFOCC}

Suppose that LFOCC is the composite free operations.
First, note that we can always perform a locally free measurement at $A$, then send the outcome to $B$.
$B$ then simply guess according to $A$'s outcome.
This strategy shows that
\begin{align}
  \dist_{LFOCC}^{\epsilon} (\rho^{(A)} \| S^{(A)})
  &\geq
    D_{H, M_A}^\epsilon (\rho^{(A)} \| S^{(A)}),
\end{align}
where $M_A$ is the set of free measurements in $A$, and $D_{H, M}^\epsilon$ is the hypothesis testing relative entropy under measurements restricted to $M$.
We show that this is in fact the optimal LFOCC strategy when $A$ is governed by SIO coherence theory, even though $B$'s measurements are unrestricted.

\begin{lemma}\label{proposition:lfocc-certification-bound}
  Suppose $A$ is governed by the resource theory of SIO coherence.
  We have
  \begin{align}
    \dist_{LFOCC}^{\epsilon} (\rho^{(A)} \| S^{(A)})
    &\leq
      D_{H, M_A}^\epsilon (\rho^{(A)} \| S^{(A)}),
  \end{align}
\end{lemma}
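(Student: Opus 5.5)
The plan is to show that any LFOCC preprocessing followed by Bob's measurement $\{P^{(B)},\mathbf{1}_B-P^{(B)}\}$ induces an effective binary POVM $\{E,\mathbf{1}_A-E\}$ on Alice's input. Specifically, we will show that when $A$ is governed by SIO, this effective measurement can be replaced by a \emph{free} (incoherent) measurement on $A$ with the same type-I and type-II errors on every incoherent $\sigma$ and on $\rho$. Since the figure of merit depends only on $\Tr(E\sigma)$ and $\Tr(E\rho)$, the bound then follows by comparing with the definition of $D_{H,M_A}^\epsilon$.

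Concretely, fix an LFOCC channel $\Lambda$, Bob's initial free state, and $P^{(B)}$. Define
\[
E = \Tr_B\!\left[(\mathbf{1}_A\otimes \mu^{(B)})\,\Lambda^\dagger(\mathbf{1}_A\otimes P^{(B)})\right],
\]
so that the probability of guessing $\rho$ equals $\Tr(E X)$ for any input $X$. Here Bob's input to the preprocessing is a free state of his theory, and by the free marginal operation axiom this is without loss of generality.

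Next we expand $\Lambda$ as an $N$-round LFOCC protocol. Every Kraus operator applied on $A$ at any round belongs to an SIO instrument, conditioned on the classical transcript. Bob's actions, however arbitrary, only multiply by positive weights depending on the transcript. Hence
\[
E=\sum_{\vec l} c_{\vec l}\, K_{\vec l}^{\dagger}K_{\vec l},
\]
where $K_{\vec l}$ is a product of $A$-side SIO Kraus operators along the transcript $\vec l$, and $c_{\vec l}\in[0,1]$ collects Bob's conditional success probabilities given the transcript and $P^{(B)}$. The key step is to check that each $K^\dagger K$ built from SIO Kraus operators is diagonal in the incoherent basis. An SIO Kraus operator has the form $K=\sum_i c_i\ket{\pi(i)}\!\bra{i}$ with $\pi$ a permutation; this is the standard characterization, and it follows from the stated condition that both $K\mu K^\dagger$ and $K^\dagger\mu K$ are incoherent. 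Products of such operators keep this form, so $K^\dagger K=\sum_i |c_i|^2\ketbra{i}$ is diagonal. Therefore $E$ is diagonal with $0\le E\le\mathbf{1}$, so $\{E,\mathbf{1}-E\}$ is an incoherent measurement lying in $M_A$. Taking the supremum over $\Lambda$ and $P$ then gives $\dist_{LFOCC}^\epsilon\le D_{H,M_A}^\epsilon$.

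The main obstacle is the interleaving. Alice's later Kraus operators depend on Bob's earlier outcomes, and Bob's outcome statistics depend on Alice's earlier messages. The clean product form above therefore needs care, because the weights $c_{\vec l}$ must not depend on Alice's input state. I would handle this by induction on the number of rounds, writing the Heisenberg-picture effect backwards. At each Bob round, Bob's local operation maps the current effect operator to a positive combination of $A$-side operators indexed by the transcript, with coefficients computed from $\mu^{(B)}$. Because Bob's system starts in a fixed state uncorrelated with $A$, and only classical messages are exchanged, these coefficients are independent of Alice's input. At each Alice round, the map $X\mapsto\sum K^\dagger X K$ with SIO Kraus operators sends diagonal operators to diagonal operators. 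Induction then yields a diagonal $E$. A secondary check is that $M_A$ really contains all diagonal two-outcome POVMs, which holds because incoherent measurements in SIO are exactly those with diagonal elements, as stated in the paper.
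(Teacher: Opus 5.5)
Your proposal is correct and is essentially the paper's argument. The paper also writes the effective POVM element on $A$ as a nonnegative combination, with weights determined by Bob's operations, of operators $\mathcal{A}^\dagger\mathcal{A}$ built from $A$-side SIO Kraus operators along each transcript; it concludes diagonality after writing out the 1- and 3-round cases and asserting that the general case is identical. Your version is somewhat cleaner: you weight by Bob's actual input $\mu^{(B)}$ rather than taking a bare $\Tr_B$, and the permutation form of SIO Kraus operators handles any number of rounds at once. The interleaving issue you flag is in fact immediate, since each full-transcript Kraus operator factorizes as $\mathcal{A}_{\vec l}\otimes\mathcal{B}_{\vec l}$. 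Finally, the freeness of $\mu^{(B)}$ is never used, so your appeal to the free-marginal axiom there is unnecessary.
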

\begin{proof}
  The idea is to show that for any measurement $\{P^{(B)}, \mathbf{1}-P^{(B)}\}$ on $B$, the effective measurement $\{\Lambda^\dagger (P^{(B)}), \mathbf{1}-\Lambda^\dagger (P^{(B)})\}$ that is performed on $A$ has to be free.
  First, let us analyze the case when $\Lambda$ is a $1$-round LFOCC.
  When the communication is going from Bob to Alice, the result holds trivially, so let us focus on the case where the communication goes from Alice to Bob.
  By definition, there has to be a set of Kraus operators $K_{ij} = A_i \otimes B_{j|i}$ such that $\Bqty{A_i}$ is an SIO and $\Bqty{B_{j|i}}$ is a real operation for each $i$.
  If we do a measurement with a POVM element $P^{(B)}$ on $B$, the effective measurement on $A$ is given by
  \begin{align}
    \Tr_B \Lambda^\dagger (\mathbf{1}_A \otimes P^{(B)})
     &=
      \Tr_B
      \sum_{ij}
      \pqty{A_i \otimes B_{j|i}}^\dagger
      \pqty{\mathbf{1}_A \otimes P^{(B)}}
      \pqty{A_i \otimes B_{j|i}}
    =
      \sum_i
      A_i^\dagger A_i
      \pqty{\sum_j \Tr B_{j|i}^\dagger P B_{j|i} }.
  \end{align}
  Since $\Bqty{A_i}$ is an SIO and in particular $A_i^\dagger A_i$ must be a diagonal matrix, the effective POVM element must also be diagonal.
  This implies that the effective measurement on $A$ is an SIO measurement.

  To see that allowing additional rounds will not change the statement, suppose that $\Lambda$ is a $3$-round LFOCC.
  By definition, there exists a set of Kraus operators $K_{ijkl} = A_{k|ij} A_i \otimes B_{l|ijk} B_{j|i}$, where $\Bqty{A_i}$ and $\Bqty{A_{k|ij}}$ are SIO's and $\Bqty{B_{j|i}}$ and $\Bqty{B_{l|ijk}}$ are real operations.
  Then,
  \begin{align}
    \Tr_B \Lambda^\dagger (\mathbf{1}_A \otimes P^{(B)})
    \nonumber
    & =
      \Tr_B
      \sum_{ijkl}
      \pqty{A_{k|ij} A_i \otimes B_{l|ijk} B_{j|i}}^\dagger
      \pqty{\mathbf{1}_A \otimes P^{(B)}}
      \pqty{A_{k|ij} A_i \otimes B_{l|ijk} B_{j|i}}
    \\
    & =
      \sum_{ijk}
      A_{k|ij}^\dagger A_i^\dagger A_i A_{k|ij}
      \pqty{
      \sum_l
      \Tr B_{j|i}^\dagger B_{l|ijk}^\dagger P^{(B)} B_{l|ijk} B_{j|i}
      }.
  \end{align}
  Again, since $A_i^\dagger A_i$ are diagonal and $\Bqty{A_{k|ij}}$ is an SIO, $A_{k|ij}^\dagger A_i^\dagger A_i A_{k|ij}$ must be diagonal and therefore the effective POVM element must also be diagonal.
  The argument for a general $n$-round protocol proceeds identically.
\end{proof}

The essential ingredient of the proof is that incoherent measurements are compatible with strictly incoherent operations;
if the POVM $\Bqty{P_i}_i$ is an incoherent measurement and $\Bqty{K_j}_j$ are Kraus operators for an SIO, then $\Bqty{K_j P_i K_j^\dagger}_{ij}$ is a POVM of some incoherent measurement.
This does not hold e.g. if $A$ can perform incoherent operations.
Nevertheless, this suggests that we can generalize \cref{proposition:lfocc-certification-bound} by substituting any theory that satisfies this compatibility condition.

Note that since SIO only includes incoherent measurements, we cannot distinguish any coherent state $\rho^{(A)}$ from incoherent states with better than trivial performance.
Combined with \cref{proposition:lfocc-certification-bound}, this implies that LFOCC protocols provides no improvement over the trivial strategy.

\subsubsection{Optimal performance beyond LFOCC}

Now, let us show that the bound in \cref{proposition:remote-certification-optimal} is achievable under an additional assumption on the local resource theories.
For example, suppose $S^{(A)} \subseteq S^{(B)}$.
Then for any choice of composite free state $S$, the composite RNG theory $(S, \RNG(S))$ achieves the optimal performance.
\begin{lemma}~\label{proposition:resource-certification-optimal}
  Suppose that $S^{(A)} \subseteq S^{(B)}$.
  Then, for any RNG composite theory $(S, \RNG(S))$, we have
  \begin{align}
\dist_{RNG(S)}^\epsilon (\rho^{(A)} \| S^{(A)})
    &=
      D_{H}^\epsilon (\rho^{(A)} \| S^{(A)}),
  \end{align}
\end{lemma}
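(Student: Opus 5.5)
The upper bound $\dist_{RNG(S)}^\epsilon(\rho^{(A)}\|S^{(A)}) \leq D_H^\epsilon(\rho^{(A)}\|S^{(A)})$ is already \cref{proposition:remote-certification-optimal}. So only achievability remains. I will exhibit a single operation $\Lambda \in \RNG(S)$ that moves Alice's system to Bob verbatim. Bob then performs the optimal hypothesis-testing measurement for $\rho^{(A)}$ versus $S^{(A)}$, which reproduces both error probabilities exactly.

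The natural candidate is a swap-and-reset map. Identify $\mathcal{H}^{(A)}$ with (a subspace of) $\mathcal{H}^{(B)}$; this identification is implicit in the hypothesis $S^{(A)}\subseteq S^{(B)}$. Fix some $\mu^{(A)}_0\in S^{(A)}$ and define
\begin{equation}
\Lambda(X^{(AB)}) = \mu^{(A)}_0 \otimes \mathcal{R}\pqty{\Tr_B X^{(AB)}},
\end{equation}
where $\mathcal{R}$ relabels system $A$ as $B$. Then $\Tr_A \Lambda(\sigma^{(A)}\otimes\cdot)$ hands Bob exactly $\sigma^{(A)}$, and likewise hands him $\rho^{(A)}$ under the alternative hypothesis. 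Choose $P^{(B)}$ to be an optimal (or $\delta$-optimal) test $P$ from the definition of $D_H^\epsilon(\rho^{(A)}\|S^{(A)})$. The type-I error is then $\sup_{\sigma\in S^{(A)}}\Tr(\sigma P)\le\epsilon$, and the type-II error is $\Tr(\rho(\mathbf 1-P))$. Taking $\delta\to0$ gives the lower bound.

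The remaining step is to check that $\Lambda\in\RNG(S)$. Take any $\nu\in S$. By condition~\ref{condition:free marginal states}, $\Tr_B\nu\in S^{(A)}\subseteq S^{(B)}$. Hence $\Lambda(\nu)=\mu_0^{(A)}\otimes\mathcal R(\Tr_B\nu)$ is a product of locally free states, which lies in $S$ by condition~\ref{condition:free product states}. This is the same mechanism used in \cref{ex:entanglement_coherence_conversion}.

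The main subtlety is where the hypothesis is applied to Alice's input. The figure of merit feeds $\sigma^{(A)}$ (or $\rho^{(A)}$) into $\Lambda^{(AB)}$, so Bob's input register must be fixed. Since $\Lambda$ discards $B$, the choice of Bob's input is irrelevant. Beyond that, I only need dimension compatibility of $A$ and $B$, which the inclusion $S^{(A)}\subseteq S^{(B)}$ presupposes. If $\dim\mathcal H^{(A)}<\dim\mathcal H^{(B)}$, I would embed isometrically and read the inclusion through that embedding.
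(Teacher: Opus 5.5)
Your proposal is correct and follows the paper's proof. The upper bound comes from \cref{proposition:remote-certification-optimal}, and achievability uses the same reset-and-transfer map $X^{(A)}\otimes Y^{(B)}\mapsto \mu^{(A)}\otimes X^{(B)}$, under which Bob's test $P^{(B)}$ pulls back to $P^{(A)}\otimes\mathbf{1}_B$. Your explicit check that this map lies in $\RNG(S)$, via the free-marginal-states and free-product-states conditions, spells out a step the paper only asserts.
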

\begin{proof}
  Showing that $D_{H}^\epsilon (\rho^{(A)} \| S^{(A)})$ is achievable is sufficient for the proof.
  To this end, observe that for any fixed $\mu^{(A)} \in S^{(A)}$, the map $\Lambda(X^{(A)} \otimes Y^{(B)}) = \mu^{(A)} \otimes X^{(B)}$ is in $\RNG(S)$ since $S^{(A)} \subseteq S^{(B)}$.
  Also, note that $\Lambda^\dagger(\mathbf{1}_A \otimes P^{(B)}) = P^{(A)} \otimes \mathbf{1}_B$.
  Combining these two observations, we obtain the claim.
\end{proof}

The conditions for \cref{proposition:resource-certification-optimal} are satisfied when $A$ is the resource theory of SIO coherence and $B$ is the resource theory of imaginarity.
This shows that there is a maximal separation in the performance between LFOCC and the optimal resource theoretic protocols.

\end{document}